\Crefname{remark}{Remark}{Remarks}
\Crefname{observation}{Observation}{Observations}
\theoremstyle{plain}
\newtheorem{theorem}{Theorem}[section]
\newtheorem{lemma}[theorem]{Lemma}
\newtheorem{corollary}[theorem]{Corollary}
\theoremstyle{definition}
\newtheorem{definition}[theorem]{Definition}
\theoremstyle{plain}
\theoremstyle{remark}
\newcommand{\namedref}[2]{\hyperref[#2]{#1~\ref*{#2}}}
\newcommand{\eps}{\varepsilon}
\DeclareMathOperator{\poly}{poly}
\newcommand{\LOCAL}{\ensuremath{\mathsf{LOCAL}}\xspace}
\newcommand{\CONGEST}{\ensuremath{\mathsf{CONGEST}}\xspace}
\newcommand{\set}[1]{\left\{#1\right\}}
\newcommand{\degmin}{d_{\phi}^{-}}
\newcommand{\phalpha}[1]{\alpha_{#1}(\phi)}
\newenvironment{myabstract}
{\list{}{\listparindent 1.5em%
		\itemindent    \listparindent
		\leftmargin    1cm
		\rightmargin   1cm
		\parsep        0pt}%
	\item\relax}
{\endlist}
\newenvironment{mycover}
{\list{}{\listparindent 0pt
		\itemindent    \listparindent
		\leftmargin    1cm
		\rightmargin   1cm
		\parsep        0pt}%
	\raggedright
	\item\relax}
{\endlist}
\newcommand{\myemail}[1]{\,$\cdot$\, {\small #1}}
\newcommand{\myaff}[1]{\,$\cdot$\, {\small #1}\par\smallskip}
\definecolor{darkgreen}{rgb}{0,0.5,0}
\definecolor{darkred}{rgb}{0.4,0,0}
\begin{document}

	\begin{mycover}
		{\huge\bfseries\boldmath Distributed Edge Coloring in Time Polylogarithmic in $\Delta$ \par}
		\bigskip
		\bigskip
		\bigskip
		
		\textbf{Alkida Balliu}
		\myemail{alkida.balliu@gssi.it}
		\myaff{Gran Sasso Science Institute}
		
		\textbf{Sebastian Brandt}
		\myemail{brandt@cispa.de}
		\myaff{CISPA Helmholtz Center for Information Security}
		
		\textbf{Fabian Kuhn}
		\myemail{kuhn@cs.uni-freiburg.de}
		\myaff{University of Freiburg}
		
		\textbf{Dennis Olivetti}
		\myemail{dennis.olivetti@gssi.it}
		\myaff{Gran Sasso Science Institute}
	\end{mycover}
	\bigskip

\begin{myabstract}
  We provide new deterministic algorithms for the edge coloring problem, which is one of the classic and highly studied distributed local symmetry breaking problems. As our main result, we show that a $(2\Delta-1)$-edge coloring can be computed in time $\poly\log\Delta + O(\log^* n)$ in the \LOCAL model. This improves a result of Balliu, Kuhn, and Olivetti [PODC '20], who gave an algorithm with a quasi-polylogarithmic dependency on $\Delta$. We further show that in the \CONGEST model, an $(8+\eps)\Delta$-edge coloring can be computed in $\poly\log\Delta + O(\log^* n)$ rounds. The best previous $O(\Delta)$-edge coloring algorithm that can be implemented in the \CONGEST model is by Barenboim and Elkin [PODC '11] and it computes a $2^{O(1/\eps)}\Delta$-edge coloring in time $O(\Delta^\eps + \log^* n)$ for any $\eps\in(0,1]$. 
\end{myabstract}

\clearpage
\setcounter{page}{0}
\thispagestyle{empty}
\tableofcontents

\clearpage

\section{Introduction}

In the most standard setting of distributed graph algorithms, we are
given a network that is modeled as an undirected graph $G=(V,E)$. The
nodes $V$ are the active entities of the network and communication
happens by exchanging messages over the edges $E$ in synchronous
rounds. The goal is to solve some graph problem on $G$. This
distributed computation model is known as the \LOCAL model if the
communication between neighbors in each round is not restricted and it
is known as the \CONGEST model if the messages exchanged neighbors
have to consist of at most $O(\log n)$ bits (where $n=|V|$)~\cite{linial1987,Peleg2000}. Four graph
problems that have received particular attention in this context are
the problems of computing a maximal independent set (MIS) of $G$, a $(\Delta+1)$-vertex coloring of $G$ (where
$\Delta$ is the maximum degree of $G$),  a
maximal matching of $G$, and a
$(2\Delta-1)$-edge coloring of $G$. All four problems have in common
that they can be solved by a trivial sequential greedy algorithm.
The four problems can be seen as prototypical examples of distributed
symmetry breaking problems and understanding the distributed
complexities of them has been at the very core of the
area of distributed graph algorithms, e.g.\
\cite{Barenboim2013,linial1987,stoc17_complexity}. All four problems
can be solved by quite simple and straightforward $O(\log n)$-round
randomized distributed algorithms, which have been known for more than
thirty years~\cite{Alon1986,Israeli1986,linial1987,luby1986}. In light
of those efficient randomized algorithms, a lot of the work
concentrated on developing deterministic distributed algorithms for
the four problems. As the contributions of the present paper are on
deterministic algorithms, we also focus on deterministic algorithms
when discussing the prior work in the following.

\paragraph{Deterministic Complexity as a Function of the Network Size.} In
\cite{Awerbuch89,panconesi96decomposition}, a tool called network
decomposition was introduced as a generic technique to solve
distributed graph problems in the \LOCAL model. This resulted in deterministic
$2^{O(\sqrt{\log n})}$-round algorithms in particular for the four
problems discussed above and it left open the question of whether the
problems can also be solved deterministically in polylogarithmic time
and thus similarly fast as with the simple randomized algorithms of
the 1980s. This was first shown for the maximal matching problem in
\cite{Hanckowiak1998,HanckowiakKP99}, where the authors gave a deterministic
$O(\log^4 n)$-time algorithm. The result was later improved to the
current running time of $O(\log^2\Delta\cdot\log n)$ in
\cite{fischer17improved}. Much more recently, it was shown that also the
$(2\Delta-1)$-edge coloring problem can also be solved in polylogarithmic
time deterministically~\cite{FischerGK17,FOCS18-derand,Harris20} by reducing the problem to the
problem of computing a maximal matching in $3$-uniform hypergraphs and
by giving a polylogarithmic-time deterministic distributed algorithm
for maximal matching in hypergraphs of bounded rank. The algorithm of
\cite{Harris20} achieves a round complexity of
$\tilde{O}(\log^2\Delta\cdot\log n)$ and is thus almost as fast as the
maximal matching algorithm of \cite{fischer17improved}. Subsequently, in a
breakthrough result, Rozho\v{n} and Ghaffari~\cite{Rozhon2020} gave a
polylogarithmic-time algorithm for the network decomposition
problem. The current best version of this algorithm in \cite{GGR2020}
implies $O(\log^5 n)$-time deterministic distributed algorithms for
all four problems discussed above. Finally, in a recent
paper~\cite{GhaffariKuhn20}, a direct $O(\log^2\Delta\cdot\log n)$-round
algorithm for the $(\Delta+1)$-vertex coloring (and thus also for the
$(2\Delta-1)$-edge color problem) was presented.

\paragraph{Deterministic Complexity as a Function of the Maximum
  Degree.} The optimal time complexity of a given graph problem in the
\LOCAL model can be interpreted as the \emph{locality} of the problem
in the following sense. If there is an $R$-round deterministic \LOCAL
algorithm for solving a problem in a graph $G$, then every node of
$G$ can compute its output as a function of its $R$-hop neighborhood
 in $G$ and if more than $R$ rounds are needed to solve a
problem, then some node must learn something about the graph that is
outside the node's $R$-hop neighborhood~\cite{linial1987,Peleg2000}. The local neighborhood
of a node is in principle independent of the size of the network. It
is therefore natural to ask for the locality of graph problems not
just as a function of the network size, but also as a function of more
local properties. Further, local graph algorithms are particularly interesting
in large networks where the node degrees might be independent or
almost independent of the network size. As a result, there is an
extended body or prior work that tries to understand the distributed
complexity of graph problems as a function of the maximum degree
$\Delta$ rather than as a function of $n$.

Note however that the distributed complexity of many problems cannot
be completely independent of $n$. Linial~\cite{linial1987} showed that
even in networks of maximum degree $\Delta=2$, computing a coloring
with $O(1)$ colors (and by simple reductions computing solutions for
all four classic problems discussed above) requires at least
$\Omega(\log^* n)$ rounds.\footnote{Note that even in bounded-degree
  graphs, the local neighborhoods are not completely independent of
  the network size. In order to equip the network with unique
  identifiers, the space from which the identifiers are chosen has to
  grow as a function of the network size $n$.} However, in
\cite{linial1987}, it is also shown that in $O(\log^* n)$ rounds, one
can compute a vertex coloring with $O(\Delta^2)$ colors for any graph
$G$. Given such a coloring, one can then iterate through the color
classes and obtain simple distributed $O(\Delta^2)$-round
implementations of the natural sequential greedy algorithms for
computing an MIS, a maximal matching, a $(\Delta+1)$-vertex coloring,
or a $(2\Delta-1)$-edge coloring. All four problems can therefore be
solved in $O(\Delta^2+\log^* n)$ rounds in the \LOCAL model (and also
in the \CONGEST model). That is, we can solve the problems in time
$O(f(\Delta)+\log^*n)$ for some function $f$. This keeps the
dependency on $n$ as small as it can be. When establishing the local
complexity of a distributed graph problem, we are interested in
optimizing the function $f$ and thus the $\Delta$-dependency of this
bound.

Starting from the early work on distributed graph algorithms, there is a long line of research that tries to optimize the $\Delta$-dependency of the four discussed problems~\cite{linial1987,goldberg88,SzegedyV93,Panconesi01simple,KuhnW06,barenboim14distributed,BarenboimE10,BarenboimE11,barenboim16sublinear,fraigniaud16local,BEG18,MausTonoyan20,Kuhn20,BalliuKO20}. As given a $C$-vertex or a $C$-edge coloring, all four problems can be solved in $C$ rounds, the primary focus was on developing efficient coloring algorithms. In a first phase, the time for computing a $(2\Delta-1)$-edge coloring~\cite{Panconesi01simple} and  for computing a $(\Delta+1)$-vertex coloring was improved to $O(\Delta+\log^* n)$~\cite{barenboim14distributed,Kuhn2009}. As a result, we therefore also obtain $O(\Delta+\log^*n)$-round algorithms for MIS and maximal matching. In \cite{Balliu2019,trulytight,balliu2021hideandseek}, it was shown that for MIS and maximal matching, this bound is tight, even on tree networks. More specifically, it was shown that there is no randomized MIS or maximal matching algorithm with round complexity $o\big(\Delta + \frac{\log\log n}{\log\log\log n}\big)$ and there is no deterministic such algorithm with round complexity $o\big(\Delta + \frac{\log n}{\log\log n}\big)$.  While for MIS and matching, there is a linear-in-$\Delta$ lower bound, for $(\Delta+1)$-vertex coloring and $(2\Delta-1)$-edge coloring, there are in fact algorithms with a sublinear-in-$\Delta$ complexity. This was first shown by Barenboim in \cite{barenboim16sublinear}. The best known algorithm that works for both vertex and edge coloring has a time complexity of $O(\sqrt{\Delta\log\Delta} + \log^* n)$~\cite{fraigniaud16local,BEG18,MausTonoyan20}. For $(2\Delta-1)$-edge coloring, it has recently been shown that we can even obtain a dependency on $\Delta$ that is subpolynomial in $\Delta$. First, Kuhn~\cite{Kuhn20} showed that the problem can be solved in $2^{O(\sqrt{\log\Delta})} + O(\log^* n)$ rounds and subsequently, Balliu, Kuhn, and Olivetti~\cite{BalliuKO20} showed that the number of rounds can even be reduced to $\log^{O(\log\log\Delta)}\Delta + O(\log^* n)$ and thus to a quasi-polylogarithmic dependency on $\Delta$. This leaves a natural open question.

\begin{center}
  \begin{minipage}[h]{0.94\linewidth}
    \em Is it possible to solve $(2\Delta-1)$-edge coloring in time polylogarithmic in $\Delta$?
  \end{minipage}
\end{center}

\paragraph{Our Contribution.} As our main result, we resolve this above open question and prove the following theorem.

\begin{center}
\begin{minipage}{0.99\textwidth}
\begin{mdframed}[hidealllines=false, backgroundcolor=gray!15]
  \begin{theorem}\label{thm:mainsimple}
    There is a deterministic $\poly\log\Delta + O(\log^* n)$-round \LOCAL algorithm to solve the $(2\Delta-1)$-edge coloring problem.
  \end{theorem}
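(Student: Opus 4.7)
The overall plan is to decouple the $\Delta$- and $n$-dependencies. I would first invoke Linial's classical $O(\log^*n)$-round algorithm on the line graph of $G$ to compute a proper $O(\Delta^2)$-edge coloring. This is the only place where $n$ enters: from then on the input is a graph $G$ together with an $O(\Delta^2)$-edge coloring, and the remaining task is to refine this pre-coloring into a $(2\Delta-1)$-edge coloring by purely local computations whose round complexity depends only on $\Delta$. The pre-coloring plays the role of $O(\log \Delta)$-bit ``quasi-identifiers'' which are enough to drive any subsequent symmetry-breaking primitive in time independent of $n$, so that all subsequent work counts against the $\poly\log\Delta$ budget rather than the $\log^* n$ budget.

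The natural approach for the second stage is a recursive palette-reduction scheme based on edge partitioning. At each level I would deterministically partition the current edges into two subgraphs $G_1,G_2$ with roughly balanced vertex degrees (a \emph{degree split}) and then recursively color $G_1$ and $G_2$ on two disjoint sub-palettes of sizes about $2\Delta_1-1$ and $2\Delta_2-1$. Because the two branches operate on edge-disjoint subgraphs with disjoint palettes, they can be executed in parallel, so per-level costs add rather than multiply, and after $O(\log\Delta)$ levels the subproblems have constant maximum degree and can be handled in $O(1)$ rounds. Summing costs gives a $\poly\log\Delta$-round algorithm, provided each level's splitting subroutine itself runs in $\poly\log\Delta$ rounds and produces a split whose cumulative deficiency across the $O(\log\Delta)$ levels stays within the slack between $2\Delta-1$ and $(2\Delta_1-1)+(2\Delta_2-1)$.

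The main obstacle, and the gap to close compared with the quasi-polylogarithmic algorithm of Balliu, Kuhn, and Olivetti, is exactly this per-level splitting primitive. Previous deterministic splitters are naturally recursive and invoke themselves on smaller subinstances, which produces a tower of logarithms in $\Delta$ and accounts for the prior $\log^{O(\log\log\Delta)}\Delta$ bound. To bypass this tower, I would aim to design a \emph{one-shot} deterministic splitter that consumes the $O(\Delta^2)$-pre-coloring directly and whose analysis rests on a potential function that decreases by a $\poly(1/\log\Delta)$ factor each round, rather than on a recursive reduction to smaller splitting problems. The hardest step will be constructing such a potential and proving that it can be driven deterministically to zero in $\poly\log\Delta$ rounds while simultaneously controlling the splitting error tightly enough that the palettes from all levels fit inside $2\Delta-1$; once such a splitter is available, the outer recursive structure sketched above combines with Linial's preprocessing to yield the claimed bound.
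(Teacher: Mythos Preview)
Your high-level architecture matches the paper's: an initial $O(\log^* n)$-round coloring to eliminate the $n$-dependence, followed by a recursive halving of the color space driven by a per-level edge-splitting primitive that must run in $\poly\log\Delta$ rounds. But your proposal has a genuine gap at exactly the point you yourself flag as ``the hardest step'': you do not construct the splitter. Saying you ``would aim to design a one-shot deterministic splitter \dots\ whose analysis rests on a potential function that decreases by a $\poly(1/\log\Delta)$ factor each round'' is a restatement of the goal, not a construction. This is precisely the paper's main technical contribution, and it is far from routine: the paper introduces a \emph{generalized token dropping game} (\Cref{sec:tokendropping}) and uses it to compute a \emph{generalized balanced edge orientation} (\Cref{def:balancedorientation}), which yields the defective $2$-edge coloring that serves as the splitting primitive (\Cref{cor:generalized-defective-2-edge-coloring}). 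The new token dropping algorithm must work on directed graphs with cycles, unlike the acyclic game of \cite{tokendropping}, and this forces a substantially different algorithm and analysis.

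There is also a subtler issue with the plan as stated. A plain vertex-degree split---partition the edges so each vertex has roughly half its degree on each side---does not by itself reach $(2\Delta-1)$ colors: the per-level errors accumulate over $O(\log\Delta)$ levels and one only gets $O(\Delta)$ colors (this is essentially how the paper obtains \Cref{thm:congest}). To hit $2\Delta-1$, the paper passes through the $(\mathit{degree}+1)$-\emph{list} edge coloring problem. At each level the global color space is halved; an edge $e$ whose list has a $\lambda_e$-fraction of its colors on the left must go left only if its degree there is at most roughly a $\lambda_e$-fraction of its current degree. This requires the \emph{generalized} defective $2$-edge coloring of \Cref{def:defective2coloring} with edge-dependent parameters $\lambda_e\in[0,1]$, and it is exactly this generalization that makes the associated token dropping graph contain directed cycles and necessitates the new machinery of \Cref{sec:tokendropping}. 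Your proposal does not engage with this list-coloring layer, so even granting a $\poly\log\Delta$-round balanced degree split, the scheme as written would fall short of $(2\Delta-1)$.
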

\end{mdframed}
\end{minipage}
\end{center}

In fact, as we will prove \Cref{thm:mainsimple} also for the more general $(\mathit{degree}+1)$-list edge coloring problem, as long as all the colors come from a domain of size at most $\poly(\Delta)$. In this problem, initially, every edge $e$ is given a list consisting of at least $\deg_G(e)+1$ different colors, where the degree $\deg_G(e)$ of $e$ is defined as the number of edges that are adjacent to $e$. The output must be a proper edge coloring such that every edge $e$ is colored with some color from its list. In addition, we also provide a more efficient edge coloring algorithm for the \CONGEST model. The best known \CONGEST algorithm for computing a $(2\Delta-1)$-edge coloring (as a function of $\Delta$) has a round complexity of $O(\Delta+\log^* n)$~\cite{BEG18}. In \cite{BarenboimE11}, it was further shown that for any $\eps\in(0,1]$, a $2^{O(1/\eps)}\cdot\Delta$-edge coloring can be computed in $O(\Delta^{\eps}+\log^* n)$ rounds in the \CONGEST model.\footnote{The authors of \cite{BarenboimE11} do not explicitly show that their algorithm works in the \CONGEST model. It is however not hard to see that it can be adapted to work in the \CONGEST model.} 
We improve this by showing that an $O(\Delta)$-edge coloring can be computed in polylogarithmic (in $\Delta$) time.

\begin{center}
\begin{minipage}{0.99\textwidth}
\begin{mdframed}[hidealllines=false, backgroundcolor=gray!15]
  \begin{theorem}\label{thm:congest}
    For any constant $\eps>0$, there is a deterministic $\poly\log\Delta + O(\log^* n)$-round \CONGEST algorithm to compute an $(8+\eps)$-edge coloring.
  \end{theorem}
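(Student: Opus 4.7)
The plan is to reduce the CONGEST problem to instances of the LOCAL algorithm of \Cref{thm:mainsimple} applied to edge-disjoint subgraphs of $G$ with disjoint palettes, so that the $O(\log n)$-bandwidth constraint does not interfere with the subroutine. Allowing a constant-factor blow-up in the total number of colors is exactly what the $(8+\eps)\Delta$ budget buys, relative to the $(2\Delta-1)$ bound of \Cref{thm:mainsimple}.

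Concretely, I would first decompose the edges of $G$ into a small number of edge-disjoint subgraphs $H_1,\dots,H_c$ whose maximum degrees $\Delta_i$ satisfy $\sum_i (2\Delta_i-1)\le (8+\eps)\Delta$. A natural way to produce such a decomposition is via an H-partition (or arbdefective coloring) of the vertex set combined with an orientation-and-split step; these tools admit $\poly\log\Delta + O(\log^* n)$-round CONGEST implementations in the Barenboim--Elkin style, with the additional slack in colors absorbing both the rounding in the out-degrees and the cost of making each component CONGEST-friendly. I would then assign each $H_i$ a pairwise-disjoint palette of $2\Delta_i-1$ colors and color the $H_i$ in parallel, so that the union of proper edge colorings is a proper edge coloring of $G$ with at most $(8+\eps)\Delta$ colors.

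The real work is showing that \Cref{thm:mainsimple} admits a CONGEST implementation when the palette is of $\poly(\Delta)$ size. The main obstacle is communication: the LOCAL algorithm manipulates per-edge color lists of length $\Theta(\Delta)$, which cannot be shipped across an edge in a single round of CONGEST. I would revisit the building blocks underlying \Cref{thm:mainsimple}---the $(\mathit{degree}+1)$-list edge coloring subroutine, the hypergraph maximal-matching/rounding primitive, and the relevant network-decomposition components---and argue that each admits a CONGEST implementation within the same asymptotic round count when the palette is of polynomial size, possibly at the cost of additional colors. The hard part of the proof is certifying this CONGEST implementability end-to-end: the edge partition and palette assignment above are essentially bookkeeping, and the substantive content lies in checking that every step of the core algorithm survives the $O(\log n)$-bit message constraint, with the $(8+\eps)\Delta$ palette providing just enough slack to accommodate the additional colors needed to convert list-based LOCAL operations into bandwidth-efficient CONGEST operations.
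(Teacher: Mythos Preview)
Your proposal misidentifies both the relationship between the two theorems and the techniques underlying them. The paper does \emph{not} obtain \Cref{thm:congest} by taking the \LOCAL algorithm of \Cref{thm:mainsimple} and squeezing it through the \CONGEST bandwidth constraint. In fact the \CONGEST result is conceptually simpler and is developed first; the \LOCAL $(2\Delta-1)$-algorithm then adds extra machinery (list coloring, generalized $\lambda_e$ parameters) on top of the same core primitive. Moreover, \Cref{thm:mainsimple} does not rest on hypergraph maximal matching or network decomposition at all---those are techniques from prior work. The paper's own engine is a new generalized token dropping game that yields a near-perfect defective $2$-edge coloring on $2$-colored bipartite graphs: one can two-color the edges so that each edge $e$ has at most $(1/2+\eps')\deg_G(e)$ same-colored neighbors, and this runs natively in \CONGEST in $\poly(\log(\Delta)/\eps')$ rounds because each phase only exchanges $O(1)$-size messages.

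Given that primitive, the actual proof of \Cref{thm:congest} is quite different from your outline. On a $2$-colored bipartite graph, one recursively applies the defective $2$-edge coloring $k\approx\log\Delta$ times with $\eps'\approx\eps/\log\Delta$, obtaining $2^k$ edge-disjoint subgraphs of edge degree $\poly\log\Delta$; coloring these with disjoint palettes gives a $(2+\eps)\Delta$-edge coloring. To handle general graphs, one computes a $(\Delta/2+\eps_1\Delta)$-defective $4$-vertex coloring (an $O(1/\eps_1^2)$-round \CONGEST step from \cite{barenboim14distributed}), colors the two induced bipartite layers with $(4+O(\eps))\Delta$ colors total, and recurses on the monochromatic remainder whose degree is roughly halved. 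The geometric sum over $O(\log\Delta)$ levels gives $(8+\eps)\Delta$ colors. Your proposed edge decomposition into $H_1,\dots,H_c$ with $\sum(2\Delta_i-1)\le(8+\eps)\Delta$ is not the mechanism used, and your plan to audit list-coloring and rounding subroutines for \CONGEST compatibility is aimed at components that simply are not present.
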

\end{mdframed}
\end{minipage}
\end{center}

\paragraph{Further Related Work.} There is also substantial work on randomized algorithms for computing  graph colorings~\cite{PanconesiS97,SchneiderW10symmetry,Barenboim2016,ElkinPS15,HarrisSS18,ChangLP18,HalldorssonKNT22}, MIS~\cite{Barenboim2016,ghaffari16improved}, and maximal matchings~\cite{Barenboim2016}. The best known randomized complexities are $O(\log\Delta + \log^5\log n)$ for MIS~\cite{ghaffari16improved,GGR2020}, $O(\log\Delta + \log^3\log n)$ for maximal matching~\cite{Barenboim2016,fischer17improved}, and $O(\log^3\log n)$ for $(\Delta+1)$-vertex coloring and $(2\Delta-1)$-edge coloring~\cite{ElkinPS15,ChangLP18,GhaffariKuhn20}. For the edge coloring problem, which is the main focus of the present paper, there have also been various results on finding edge colorings with less than $2\Delta-1$ colors. Note first that it is not possible to compute such a coloring in time $O(f(\Delta) + \log^* n)$. In \cite{chang20edgecoloring} (based on techniques developed in \cite{Brandt2016,chang16exponential}), it was shown that for every $\Delta>1$, every deterministic algorithm for computing a $(2\Delta-2)$-edge coloring of $\Delta$-regular trees requires at least $\Omega(\log_\Delta n)$ rounds and every randomized such algorithm requires at least $\Omega(\log_\Delta \log n)$ rounds. When computing an edge coloring with less than $2\Delta-1$ colors, the objective therefore is on the one hand to obtain a coloring that uses not many more than $\Delta$ colors and on the other hand to achieve a time complexity that gets as close as possible to the lower bounds of \cite{chang20edgecoloring}. In recent years, different distributed algorithms that compute edge colorings with $(1+\eps)\Delta$ colors (and even with $\Delta+O(1)$ colors) have been developed~\cite{DubhashiGP98,ElkinPS15,GhaffariKMU18,chang20edgecoloring,SuV19}. For example, by using the randomized algorithm of \cite{chang20edgecoloring} together with derandomization techniques of \cite{FOCS18-derand,Rozhon2020}, one obtains deterministic $\poly\log n$-time and randomized $\poly\log\log n$-time  algorithms for computing edge colorings with $\Delta+\tilde{O}(\sqrt{\Delta})$ colors.

\section{Model and Definitions}
\paragraph{Basic notions}
	Leg $G = (V,E)$ be a graph. We denote with $\Delta$ the maximum degree of $G$, and with $\bar{\Delta}$ the maximum degree of the line graph of $G$, that is, the maximum number of neighboring edges of an edge. Clearly, $\bar{\Delta} \le 2 \Delta - 2$. We denote with $\deg_G(v)$ the degree of a node $v \in V$, and for an edge $e \in E$ we denote with $\deg_{G}(e)$ the degree of edge $e$ in the line graph of $G$, that is, for $e=\set{u,v}$, $\deg_G(e)=\deg_G(u)+\deg_G(v)-2$.  If $G$ is a directed graph, we use $\deg_G(v)$ and $\deg_G(e)$ to denote the degrees of $v$ and $e$ in the undirected version of $G$. If the graph is clear from the context, we may omit it and write $\deg(v)$ and $\deg(e)$. We assume that $\log n$ denotes $\log_2 n$.

\paragraph{List Edge Coloring.}
Assume that for a graph $G = (V,E)$, we are given a set $C = \{1,\ldots,|C|\}$ of colors, called the color space, and for each edge $e = \{u,v\}$, we are given a list of colors $L_e \subseteq C$. The list edge coloring asks to assign a color $c_e \in L_e$ to each edge $e$ such that edges incident to the same node are assigned different colors. In the distributed version of the problem, we assume that all nodes know $C$, both nodes of an edge know the list $L_e$ and at the end, both nodes need to output the color of $e$. The $(\mathit{degree} + 1)$-list edge coloring problem is a special case where $|L_e| \ge \deg_G(e) + 1$ for all $e \in E$. The standard $K$-edge coloring is a special of the list edge coloring problem in which every edge $e$ is given the set $L_e=\set{1,\dots,K}$ as its list.

\paragraph{Relaxed List Edge Coloring.}
In this work, we will make use of a technique that allows us to decompose a hard list coloring instance into many easier ones. This technique has been already used in \cite{barenboim16sublinear,fraigniaud16local,Kuhn20,BalliuKO20}. A list edge coloring instance can be characterized by a parameter $S$, specifying how much larger the lists are as compared to the degree of the edges. A list edge coloring instance is said to have slack at least $S$ if $|L_e| > S \cdot \deg(e)$, for all $e$. We define $P(\bar{\Delta},S,C)$ to be the family of list edge coloring instances where the graph has maximum edge degree $\bar{\Delta}$, the slack is at least $S$, and the color space has size $C$. We define $T(\bar{\Delta},S,C)$ to be the time required to solve $P(\bar{\Delta},S,C)$.
	
\paragraph{Defective Coloring.}
A $d$-defective $c$-coloring is an assignment of colors from
$\{1,\ldots,c\}$ to the nodes, such that the maximum degree of
each graph induced by nodes of the same colors is bounded by $d$.  The
$d$-defective $c$-edge coloring of $G$ is a $d$-defective $c$-coloring
of the line graph of $G$.

\paragraph{\LOCAL and \CONGEST Model.}
We consider two standard models of distributed computing, the \LOCAL
and the \CONGEST model~\cite{linial1987,Peleg2000}.  In the \LOCAL model, the network is
modeled as a graph $G=(V,E)$, where the nodes $V$ represent computational entities and the edges $E$ represent pairwise communication links. Communication proceeds in synchronous rounds, where in each round, each node can send (possibly
different) messages to its neighbors, receive messages from the
neighbors, and perform some internal computation. We do not restrict the
message size or the interal computational power of the nodes.

At the beginning of the computation, each node knows a unique
identifier from $\{1,\ldots,\poly n\}$, where $n = |V|$ is known to
all nodes. Further, each node knows $\Delta$, the maximum degree of
$G$. At the end of a computation, each node must produce its own part
of the solution (e.g., in the case of the edge coloring problem, each
node $v$ must output the colors of all incident edges). The time
complexity of an algorithm is the worst-case of number of rounds
required to produce the solution. We may express this running time as
a function of $n$ and $\Delta$.  The \CONGEST model is similar to the
\LOCAL model, with the only difference that each message must have
size bounded by $O(\log n)$ bits.

\section{Road Map and High Level Ideas}
\label{sec:roadmap}

In this section, we provide an overview of the main ingredients that
we use to solve the edge coloring problem.  While our algorithms do
not work exactly as stated here, we still try to highlight the core
ideas that we use. For this overview, we first sketch how to obtain an edge
coloring with $O(\Delta)$ colors.

At a very high level, our algorithm uses a divide-and-conquer idea
that has been used in various previous deterministic distributed
coloring
algorithms~\cite{Kuhn2009,barenboim14distributed,BarenboimE10,BarenboimE11,Kuhn20,BalliuKO20}. The
algorithm uses a defective coloring to divide the graph into subgraphs
of smaller degree and at the same time the global space of colors is
divided into the same number of parts so that the different low-degree
subgraphs can use disjoint color spaces. The colorings or the
different subgraphs are then colored recursively in parallel. There
are different challenges that we face with this high-level
approach. First, when computing a defective coloring with defect $d$,
all previous algorithms use a number of colors that is at best
$c\cdot \Delta / d$ for some constant $c>1$. Because of this, the ratio between
the necessary number of colors and the maximum degree grows by a
factor $c$ for every recursion level. To keep the total number of
colors needed in the end moderately small, this requires to keep the
number of recursion levels small, which in return makes the defective
coloring steps more expensive. As the main technical contribution of
this paper, we obtain a new algorithm for defective edge coloring,
where we can keep the number of colors and the reduction of the degree
arbitrarily close to $1$.

Concretely, for the special case of $2$-colored bipartite graphs, we
obtain a defective $2$-coloring algorithm with the guarantee that the
defect of every edge $e$ is only $(1/2+\eps)\cdot \deg(e)$ for an
arbitrary parameter $\eps>0$ (as long as the defect of the edge does
not fall below some threshold). We can compute this defective
$2$-coloring in time $\poly(\log(\Delta)/\eps)$. By choosing
$\eps \leq \eps'/\log\Delta$ and recursively applying the defective
$2$-coloring, we show that we can partition the graph into
$2^k=\Delta/\poly\log\Delta$ parts such that the defect of each node
$v$ is at most $(1+O(\eps'))\cdot\deg_G(e)/2^k$. We can thus compute a
$(2+\eps')\Delta$-edge coloring of a given $2$-colored bipartite graph
in time $\poly(\log(\Delta)/\eps')$. We further give a reduction that
allows to use this algorithm to color general graphs with
$(8+\eps)\Delta$ colors in time $\poly\log\Delta + O(\log^* n)$.

We next describe the high-level idea of our defective $2$-coloring algorithm. Assume that we are given a bipartite graph $G=(U\cup V, E)$, where the nodes know if they are in $U$ or in $V$. For simplicity, assume further that $G$ is $\Delta$-regular. Our goal is to color each edge either red or blue such that every red edge has at most $(1+\eps)\Delta$ adjacent red edges and every blue edge has at most $(1+\eps)\Delta$ adjacent blue edges. To achieve this, we generalize an idea that was presented in \cite{tokendropping}. There, the authors show how to efficiently solve a problem known as locally optimal semi-matching~\cite{CzygrinowHSW16} and in particular a special case of this problem, a so-called stable edge orientation of a graph. Given an edge orientation of a graph $G$, for every node $v$, let $x_v$ denote the number of incident edges that are oriented towards $v$. An edge orientation is called stable if for every edge $e=\set{u,v}$, if $e$ is oriented from $u$ to $v$, then $x_v-x_u\leq 1$ (and otherwise $x_u-x_v\leq 1$). Note that such a stable orientation of a $\Delta$-regular bipartite graph $G=(U\cup V, E)$ directly gives a perfect defective $2$-coloring. Assume that every edge that is oriented from $U$ to $V$ is colored red and every edge that is oriented from $V$ to $U$ is colored blue. If an edge $\set{u,v}$ for $u\in U$ and $v\in V$ is oriented from $u$ to $v$ and thus colored red, then the number of adjacent red edges is exactly $(x_v-1) + (\Delta-x_u-1) = \Delta +(x_v-x_u)-2\leq \Delta-1$ (by using that $x_v-x_u\leq 1$). Note that the degree of every edge in a $\Delta$-regular graph is exactly $2(\Delta-1)$. An analogous argument works for blue edges.

In \cite{tokendropping}, the authors introduce a tool that they call the \emph{token dropping game} and which is used in particular to compute stable edge orientations. The token dropping game works as follows. We are given a directed graph where each node holds either $0$ or $1$ tokens. The goal is to move tokens around and reach a ``stable'' solution, that is a solution in which tokens cannot move. A token can move from node $u$ to node $v$ if $u$ has a token, $v$ has no tokens, and the edge $(u,v)$ exists. Every time a token passes through an edge, the edge is deleted. In \cite{tokendropping}, it is shown that the token dropping game can be solved in time $O(L\cdot\Delta^2)$ if the directed graph of the game has no cycles and the longest directed path is of length $L$. It is shown that this algorithm can be used to compute a stable edge orientation and thus a perfect defective $2$-coloring in $\Delta$-regular $2$-colored bipartite graphs in $O(\Delta^4)$ rounds. 

We cannot directly apply the algorithm of \cite{tokendropping} for several reasons. First, a time complexity of $O(\Delta^4)$ is way too slow for us as we aim for algorithms that run in time polylogarithmic in $\Delta$. Second, we need to deal with non-regular graphs. Even if at the beginning the graph is regular, after a single application of defective $2$-coloring, we might end up with two very non-regular graphs.\footnote{Note that even if the new edge degrees are at most $\Delta-1$, the node degrees can still take arbitrary values between $0$ and $\Delta$.} In non-regular graphs, a stable orientation does not lead to a good defective $2$-coloring. The second problem can be solved relatively easily. For each edge $e=\set{u,v}$, one can just define a different threshold $\eta_e$ such that if $e$ is oriented from $u$ to $v$, it must hold that $x_v-x_u\leq\eta_e + 1$ and if $e$ is oriented from $u$ to $v$, it must hold that $x_u-x_v\leq -\eta_e+1$. If the thresholds $\eta_e$ are chosen in the right way, we can still get a perfect defective $2$-coloring from such an edge orientation and one can also still use the reduction in \cite{tokendropping} to the token dropping game to compute such an edge orientation. Reducing the time complexity is more challenging. For this, we relax the requirement on the orientation. For $\Delta$-regular graphs, we now want to require that if an edge $\set{u,v}$ is oriented from $u$ to $v$, then $x_v-x_u\leq \eps\Delta$ and otherwise $x_u-x_v\leq \eps\Delta$. Such an orientation can be computed fast if we have a fast algorithm to a generalized token dropping game, where each node can have up to $O(\eps\Delta)$ tokens. For more details of how to define the generalized token dropping game, we refer to \Cref{sec:tokendropping}.

As long as the directed graph of the token dropping game has no cycles and only short directed paths, it is in fact possible to adapt the token dropping algorithm and the analysis of \cite{tokendropping} (in a non-trivial way) to obtain a $\poly(\log(\Delta)/\eps)$-round algorithm to compute a defective $2$-coloring in a $2$-colored bipartite graph $G$ such that the defect of any edge $e$ of $G$ is at most $(1+\eps)\deg_G(e)$ and this is sufficient to obtain an $(8+\eps)\Delta$-edge coloring algorithm with a time complexity of $\poly\log\Delta+O(\log^* n)$. In order to obtain a $(2\Delta-1)$-coloring, there is however still one major challenge remaining. When recursively using defective colorings, we always have to slightly relax the coloring problem along the way and end up with more that $2\Delta-1$ colors. We can use such a more relaxed edge coloring to compute a $(2\Delta-1)$-edge coloring or even a $(\mathit{degree}+1)$-edge coloring if we solve the more general list edge coloring~\cite{barenboim16sublinear,fraigniaud16local}. In this case, one however also has to use a generalized version of defective coloring~\cite{Kuhn20,BalliuKO20} (for a definition of what we need, see \Cref{sec:gen-def-2-ec}). Even for this generalized defective $2$-coloring variant, one can define appropriate conditions for the required edge orientation and one can use our generalized token dropping game to compute such an edge orientation. However, in this case, the resulting token dropping graph can have cycles and therefore, even the adapted variant of the token dropping algorithm of \cite{tokendropping} does not work. In \Cref{sec:tokendropping}, we therefore design a completely new token dropping algorithm, which also works in general directed graphs (for the relaxed token dropping game that we are using).

The remainder of the paper is organized as follows. In \Cref{sec:tokendropping}, we introduce and solve the generalized token dropping game. In \Cref{sec:gen-def-2-ec}, we then introduce our generalized defective $2$-coloring problem and we show how to solve it by using our token dropping algorithm of \Cref{sec:tokendropping}. Finally, in \Cref{sec:econgest} and \Cref{sec:eclocal}, we prove our main results. We start in \Cref{sec:econgest} with the $O(\Delta)$-edge coloring algorithm for the \CONGEST model, which is conceptually simpler. Then, in \Cref{sec:eclocal}, we present our algorithm for solving the $(\deg(e)+1)$-list edge coloring problem in the \LOCAL model.

\section{The Generalized Token Dropping Game}
\label{sec:tokendropping}

A key technical tool that we use is a generalization of the token dropping game of \cite{tokendropping}. This game is defined on a directed graph $G=(V,E)$, and it has an integer parameter $k\geq 1$. Initially, each node $v\in V$ receives at most $k$ tokens as input. Each edge $e\in E$ can be either active or passive. Initially, all edges are active, and as the game progresses, edges can become passive. In a sequential execution, the game proceeds in steps, where in each step, on some active edge $(u,v)\in E$ on which $u$ has at least $1$ token and $v$ has less than $k$ tokens, one token can be moved from $u$ to $v$. After moving the token, the edge $(u,v)$ becomes passive. Passive edges cannot become active again. That is, over each edge of $G$, at most one token can be moved, and an edge is passive if and only if a token was moved over the edge. Let $\tau(v)$ be the number of tokens at a node $v$ at a given time. When the game ends, it must hold that $\tau(v)\leq k$ for every $v\in V$ and 
\begin{equation}\label{eq:tokenedgecondition}
  \forall e=(u,v)\in E\,:\, e\text{ active } \Longrightarrow \tau(u) \leq \tau(v) + \sigma(e),
\end{equation}
where $\sigma(e)\geq 0$ is a value that specifies how much slack we tolerate on a given edge $e$. In the original token dropping game introduced in \cite{tokendropping}, $k=1$ and $\sigma(e)=0$ for all edges. In \cite{tokendropping}, it was further assumed that the graph $G$ is organized in layers and that all edges are oriented from higher to lower layers (therefore the name token dropping). We here generalize the game by allowing general diriected graphs, larger values of $k$, and by tolerating some slack on the active edges.

\subsection{Distributed Token Dropping Algorithms}

In the distributed version of the (generalized) token dropping game, the goal is to run an execution of the game, where tokens are moved in parallel, but which is still equivalent to the sequential definition of the game. That is, over each edge, at most one token can be moved and at all times, and every node has a set of at most $k$ tokens.

We next describe a distributed algorithm to solve the generalized token dropping game with $\sigma(e)=\eps\cdot\deg_G(e)$ for some given parameter $\eps\in(0,1/2]$, that is, by allowing a slack proportional to the edge degree. The algorithm has an integer parameter $\delta>0$, and we will see that $\delta$ can be used to control the trade-off between the round complexity and the slack of the algorithm. The smaller $\delta$ is chosen, the smaller $\eps$ can be chosen, however the time of the algorithm depends linearly on $1/\delta$.

Throughout the algorithm, some tokens are active and some tokens are passive. Initially, all tokens are active, and once a token becomes passive, it remains passive and cannot be moved anymore. The algorithm operates in synchronous phases. We use $x_v(t)$ and $y_v(t)$ to denote the number of active and passive tokens of node $v$ at time $t$, i.e., at the end of phase $t$. The value $x_v(0)$ and $y_v(0)$ denote the number of tokens of $v$ at the beginning. At all times $t\geq 0$, the algorithm always guarantees that $x_v(t)+y_v(t)\leq k$ for every node $v\in V$. We further have $y_v(0)=0$ for all $v\in V$. For each node, we further define an integer parameter $\alpha_v\geq 1$ that controls how much slack node $v$ is willing to tolerate on its edges. We will see that if we want to tolerate slack $\sigma_e$ on an edge $\set{u,v}$, then we in particular have to choose $\alpha_u,\alpha_v\leq c\sigma_e$ for a sufficiently small constant $c$.
The value of $\alpha_v$ controls how much slack node $v$ is willing to tolerate on its edges. In the end, the slack on each edge $(u,v)$ that is still active has to be $O(\alpha_u+\alpha_v)$. The algorithm is run for $\lfloor\frac{k}{\delta}\rfloor-1$ phases. In each phase $t\geq 1$, the algorithm does the following steps.

\begin{enumerate}
\item Define $A(t)\subseteq V$ as the set of nodes $v\in V$ with $x_v(t-1)\geq \alpha_v + \delta$. We call $A(t)$ the active nodes in phase $t$ and only nodes in $A(t)$ will be able to move tokens to other nodes in phase $t$.
\item Each node $v\in A(t)$ sets $x_v'(t) := x_v(t-1) - \delta$ and $y_v'(t) := y_v(t-1)+\delta$. All other nodes $v\in V\setminus A(t)$ set $x_v'(t):=x_v(t-1)$ and $y_v'(t) := y_v(t-1)$.
\item For a node $v\in V$, let $S(v)\subset A(t)$ be the set of nodes $u\in A(t)$ such that there is an active edge $(u,v)$ from $u$ to $v$ ($S(v)$ are the nodes that can potentially send a token to $v$ in phase $t$).
\item If $x_v'(t)\leq k - t\delta - \alpha_v$, $v$ sends a token proposal to the $\min\set{|S(v)|,k-t\delta-x_v'(t)}$ nodes $w\in S(v)$, where priority is given to nodes $w\in S(v)$ of smaller $\deg_G(w)/\alpha_w$ value.
\item For each node $v\in V$, let $p_v(t)$ be the number of token proposals $v$ receives in phase $t$ and let $q_v(t):=\min\set{p_v(t),x_v'(t)}$. Node $v$ accepts $q_v(t)$ of the proposals and sends a token to the respective outneighbors. The edges over which a token is sent become passive.
\item For each node $v\in V$, let $r_v(t)$ be the number of tokens that $v$ receives in phase $t$. The number of active tokens at the end of phase $t$ of each node $v$ is set to $x_v(t):=x_v'(t)+r_v(t)-q_v(t)$, i.e., $x_v(t)$ is $x_v'(t)$ plus the number of received tokens and minus the number of sent tokens. 
\end{enumerate}

We start by proving that the maximum number of active tokens decreases after each phase, and that the total number of tokens at each node never exceeds $k$.
\begin{lemma}\label{lemma:tokenbounds}
  For all $v\in V$ and for all $t\geq 0$, if $\delta\leq \alpha_v$, it holds that $x_v(t)\leq \max\set{2\alpha_v, k - t\cdot \delta}$ and $y_v(t)\leq k - x_v(t)$.
\end{lemma}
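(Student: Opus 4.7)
The plan is to prove both inequalities simultaneously by induction on $t$, with the base case $t=0$ being immediate from $y_v(0)=0$ and $x_v(0)\leq k$. Before the case analysis I would record two bookkeeping observations that hold for every node: first, Step~2 is a pure active-to-passive transfer, so in all cases $x_v'(t)+y_v'(t)=x_v(t-1)+y_v(t-1)$; second, Steps~3--6 only update $x_v$, so $y_v(t)=y_v'(t)$, and a one-line sub-induction then yields the auxiliary bound $y_v(t)\leq t\delta$ (passive tokens increase by at most $\delta$ per phase).

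For the first inequality I would split along two binary choices: whether $v$ issues proposals in Step~4, i.e.\ whether $x_v'(t)\leq k-t\delta-\alpha_v$; and whether $v\in A(t)$. If $v$ does issue proposals, the cap in Step~4 gives $r_v(t)\leq k-t\delta-x_v'(t)$, hence $x_v(t)=x_v'(t)+r_v(t)-q_v(t)\leq k-t\delta$. Otherwise $r_v(t)=0$ and $x_v(t)\leq x_v'(t)$; here the case $v\in A(t)$ yields $x_v'(t)=x_v(t-1)-\delta\leq\max\{2\alpha_v-\delta,\,k-t\delta\}$ by the inductive hypothesis, while the case $v\notin A(t)$ forces $x_v'(t)=x_v(t-1)<\alpha_v+\delta\leq 2\alpha_v$ (using $\delta\leq\alpha_v$). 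Every branch falls within $\max\{2\alpha_v,\,k-t\delta\}$.

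For the capacity bound $y_v(t)\leq k-x_v(t)$, equivalently $x_v(t)+y_v(t)\leq k$, I would reuse the same split. When $v$ does not issue proposals, $r_v(t)=0$, and the Step~2 identity together with the inductive hypothesis gives $x_v(t)+y_v(t)\leq x_v'(t)+y_v'(t)=x_v(t-1)+y_v(t-1)\leq k$. When $v$ does issue proposals, the first-bound analysis already gave $x_v(t)\leq k-t\delta$, and combining with the auxiliary bound $y_v(t)\leq t\delta$ finishes the argument. The main obstacle is to see why the reservation ``$-t\delta$'' in Step~4 is exactly the right one: it is not dictated by the capacity $k$ alone, but rather by the auxiliary inequality $y_v(t)\leq t\delta$, which is precisely what turns the Step~4 threshold into the capacity guarantee $x_v(t)+y_v(t)\leq k$ in the proposing case.
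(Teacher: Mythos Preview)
Your proposal is correct and follows essentially the same approach as the paper: induction on $t$, the auxiliary bound $y_v(t)\leq t\delta$, and a case analysis driven by whether $v$ sends proposals and whether $v\in A(t)$. The paper organizes the split for the first inequality around the threshold $x_v'(t)\lessgtr k-t\delta$ (and then derives a contradiction in one sub-case), whereas you split directly on the Step~4 condition and on $A(t)$; your conservation identity $x_v'(t)+y_v'(t)=x_v(t-1)+y_v(t-1)$ is exactly what the paper unpacks case-by-case in its treatment of the second inequality.
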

\begin{proof}
  We prove the upper bound on $x_v(t)$ by induction on $t$. For $t=0$,
  we have $x_v(0)\leq k$ and therefore clearly
  $x_v(0)\leq \max\set{\alpha_v, k - 0\cdot \delta}$. For the
  induction step, let us assume that $t\geq 1$ and let us therefore
  focus on what happens in phase $t$. Note that in step 4 of phase $t$
  the above algorithm, each node $v$ sends at most
  $k-t\delta - x_v'(t)$ proposals and it therefore receives at most
  that many new tokens. If $x_v'(t)\leq k - t\delta$, we therefore
  also have $x_v(t)\leq x_v'(t) + (k-t\delta-x_v'(t))=k-t\delta$. In
  this case, we have proven the required upper bound on $x_v(t)$. Let
  us therefore assume that $x_v'(t)>k-t\delta$. In this case, $v$ does
  not send any proposals and we therefore know that
  $x_v(t)\leq x_v'(t)$. By the induction hypothesis, we also know that
  $x_v(t-1)\leq \max\set{2\alpha_v, k - (t-1)\cdot \delta}$. Since we
  know that $x_v'(t)\leq x_v(t-1)$, then, whenever $x_v(t-1)\leq 2\alpha_v$,
  we now also have $x_v(t)\leq x_v'(t)\leq 2\alpha_v$. Let us
  therefore further assume that
  $2\alpha_v< x_v(t-1)\leq k-(t-1)\delta$. Because
  $\delta\leq\alpha_v$, in this case, we have
  $x_v(t-1)> 2\alpha_v \geq \alpha_v + \delta$ and therefore
  $v\in A(t)$. However, we then set
  $x_v'(t) := x_v(t-1)-\delta \leq k- t\delta$, and we therefore also
  have $x_v(t)\leq k-t\delta$. The proves the upper bound on $x_v(t)$.

  It remains to show that $y_v(t)\leq k-x_v(t)$. We again apply
  induction on $t$. Note that we have $y_v(0)=0$ and $x_v(0)\leq k$
  and therefore the bound certainly holds for $t=0$. For every
  $t\geq 1$, we either have $y_v(t)=y_v(t-1)$ and $v\not\in A(t)$ or
  $y_v(t)=y_v(t-1)+\delta$ and $v\in A(t)$. We
  therefore in particular always have $y_v(t)\leq y_v(t-1)+\delta$ and thus
  $y_v(t)\leq t\delta$. If $x_v(t)\leq k-t\delta$, we thus have
  $x_v(t)+y_v(t)\leq k$ as required. Let us therefore assume that $x_v(t)>k-t\delta$. Above, we showed that if
  $x_v'(t)\leq k-t\delta$, it follows that $x_v(t)\leq k-t\delta$. By the contrapositive, if $x_v(t)>k-t\delta$, we therefore know that also $x_v'(t)>k-t\delta$ and thus that $v$ does not sent any proposals in step 4 of the algorithm. In this case, we therefore know that $x_v(t)\leq x_v'(t)$. If $v\not\in A(t)$,
  we have $y_v(t)=y_v(t-1)$ and $x_v'(t)=x_v(t-1)$ and if $v\in A(t)$,
  we have $y_v(t)=y_v(t-1)+\delta$ and $x_v'(t)=x_v(t-1)-\delta$. In
  both cases, $x_v(t)\leq x_v'(t)$ and the induction hypothesis
  $x_v(t-1)+y_v(t-1)\leq k$ directly imply that $x_v(t)+y_v(t)\leq k$.
\end{proof}

We now prove that, for each active edge, the number of passive tokens of each endpoint cannot differ by much.
\begin{lemma}\label{lemma:passiveslackbound}
  For every edge $e=(u,v)\in E$, at the end of each phase $t\leq  k/\delta  -1$ in the above algorithm, $e$ is passive or we have
  \[
    y_u(t)- y_v(t) \leq 2\alpha_v + \left(\frac{\deg_G(u)\cdot\deg_G(v)}{\alpha_u\cdot\alpha_v}+\frac{\deg_G(u)}{\alpha_u}+\frac{\deg_G(v)}{\alpha_v}\right)\cdot \delta.
  \]
\end{lemma}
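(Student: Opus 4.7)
The strategy is to reduce the claim to a combinatorial count on phases in which $u$ is active but $v$ is not, and then to classify those phases into three types that correspond exactly to the three fractional terms appearing in the bound. By step~2 of the algorithm, $y_w(t)$ grows by exactly $\delta$ in every phase $\tau$ with $w\in A(\tau)$ and stays unchanged otherwise, so $y_w(t)=\delta\cdot N_w(t)$ where $N_w(t):=|\{\tau\in[1,t] : w\in A(\tau)\}|$. Letting $A_w(t)$ denote the corresponding phase set, this gives $y_u(t)-y_v(t)=\delta(N_u(t)-N_v(t))\leq\delta\cdot|A_u(t)\setminus A_v(t)|$, so the task reduces to bounding the number of phases in which $u$ is active while $v$ is not, under the standing assumption that $(u,v)$ is still active through phase $t$. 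In every such phase, $u\in S(v)$ because $u\in A(\tau)$ and the edge from $u$ to $v$ is active.

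I would partition $A_u(t)\setminus A_v(t)$ into three disjoint types. \textbf{Type (a), saturated:} $v$ does not send any request in phase $\tau$, i.e.\ $x_v'(\tau)>k-\tau\delta-\alpha_v$; together with $x_v'(\tau)=x_v(\tau-1)<\alpha_v+\delta$ (from $v\notin A(\tau)$) this forces $\tau\delta>k-2\alpha_v-\delta$, hence at most $2\alpha_v/\delta$ such phases, contributing $2\alpha_v$ to the final bound. \textbf{Type (b), rejected:} $v$ requests $u$ but $u$ rejects because $p_u(\tau)>x_u'(\tau)\geq\alpha_u$; since $u\in A(\tau)$, $u$ accepts at least $\alpha_u$ proposals in this phase, each permanently consuming a distinct edge incident to $u$, and as $u$ has at most $\deg_G(u)$ incident edges there are at most $\deg_G(u)/\alpha_u$ such phases. \textbf{Type (c), not selected:} $v$ does send requests but $u$ is not among the $m_\tau:=k-\tau\delta-x_v'(\tau)$ highest-priority elements of $S(v)$, hence all $m_\tau$ requested nodes $w$ satisfy $\deg_G(w)/\alpha_w\leq\deg_G(u)/\alpha_u$.

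The main obstacle is bounding the number of type-(c) phases, and this is where the cross term arises. The key observation is that in any non-saturated phase the inequality $x_v'(\tau)\leq k-\tau\delta-\alpha_v$ directly yields $m_\tau\geq\alpha_v$, so the total number of requests $v$ sends across all type-(c) phases is at least $\alpha_v$ times the number of such phases. It then suffices to upper bound this total by splitting each request into accepted or rejected. Each \emph{accepted} request from $v$ makes a token traverse and deactivate an incoming edge of $v$; since the edge $(u,v)$ is never one of these (by assumption) and $v$ has at most $\deg_G(v)$ incident edges, there are at most $\deg_G(v)$ accepted requests by $v$ across the whole algorithm. Each \emph{rejected} request from $v$ to a specific in-neighbor $w$ forces $w$ to be oversubscribed in that phase, and applying the argument of case (b) to $w$ shows this happens in at most $\deg_G(w)/\alpha_w$ phases. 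Summing over the at most $\deg_G(v)$ in-neighbors $w$ of $v$ that satisfy $\deg_G(w)/\alpha_w\leq\deg_G(u)/\alpha_u$ bounds the total rejected requests by $\deg_G(u)\deg_G(v)/\alpha_u$. Dividing $\deg_G(v)+\deg_G(u)\deg_G(v)/\alpha_u$ by $\alpha_v$ then bounds the number of type-(c) phases.

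Multiplying the three type counts by $\delta$ and adding them gives exactly $2\alpha_v+\bigl(\deg_G(u)\deg_G(v)/(\alpha_u\alpha_v)+\deg_G(u)/\alpha_u+\deg_G(v)/\alpha_v\bigr)\delta$. The one delicate point to get right is the lower bound $m_\tau\geq\alpha_v$; without it one would divide the number of phases rather than the number of requests, and the cross term would come out as $\deg_G(u)\deg_G(v)/\alpha_u$ rather than the correct $\deg_G(u)\deg_G(v)/(\alpha_u\alpha_v)$.
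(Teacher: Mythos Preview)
Your proof is correct and follows essentially the same approach as the paper: both reduce to counting phases with $u\in A(\tau)$, $v\notin A(\tau)$, and both partition these into the same three classes (your types (a)/(b)/(c) correspond exactly to the paper's ``late'' phases with $t>(k-2\alpha_v)/\delta-1$, phases where $v$ proposes to $u$ unsuccessfully, and phases where $v$ proposes only to others). The only cosmetic difference is in type~(c): you separate accepted from rejected requests and obtain the $\deg_G(v)/\alpha_v$ term from the accepted ones, whereas the paper bundles both into a single $\lceil\deg_G(w)/\alpha_w\rceil$ bound per neighbor and recovers the same term from the ceiling; both accountings yield the identical final expression.
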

\begin{proof}
  In each phase $t$, for every node $w\in V$, we have
  $y_w(t)=y_w(t-1)+\delta$ if $w\in A(t)$ and $y_w(t)=y_w(t-1)$
  otherwise. To upper bound $y_u(t)-y_v(t)$ while $e=(u,v)$ is active,
  we can therefore count the number of phases in which $u\in A(t)$ and
  $v\not\in A(t)$ and where no token is passed over the edge. For $u$
  to be in $A(t)$, we must have $x_u(t-1)\geq \alpha_u + \delta$ and
  for $v$ to not be in $A(t)$, we must have
  $x_v(t-1)<\alpha_v+\delta$.

  As long as $t$ is not too large, $x_v(t-1)<\alpha_v+\delta$ implies
  that $v$ has the capacity to receive tokens in phase $t$ and it can
  therefore send proposals to active in-neighbors.  To count the
  number of phases in which $u\in A(t)$ and $v\not\in A(t)$, we
  therefore make a case distinction depending on the value of $t$. We
  first assume that $t\leq (k-2\alpha_v)/\delta - 1$. We can rewrite
  this as $\alpha_v+\delta \leq k-t\delta -\alpha_v$ and because we
  know that $x_v(t-1)\leq\alpha_v+\delta$, this implies that
  $x_v(t-1)\leq k - t\delta - \alpha_v$. Because $v\not\in A(t)$, we
  also know that $x_v'(t)=x_v(t-1)$ and therefore
  $x_v'(t)\leq k-t\delta-\alpha_v$. This is the condition that is
  needed in step 4 of the above algorithm for $v$ to send token
  proposals over its incoming active edges. Whenever $v$ sends a token
  proposal to an in-neighbor $w$, either $w$ sends a token to $v$ and
  afterwards the edge $(w,v)$ becomes passive, or $w$ sends at least
  $\alpha_w$ tokens to other out-neighbors and therefore at least
  $\alpha_w$ out-edges of $w$ become passive. To each in-neighbor
  $w\neq u$, $v$ can therefore send proposals at most
  $\lceil \deg_G(w)/\alpha_w\rceil$ times. Note that in each phase $t$,
  $v$ only sends proposals (in step 4) if
  $x_v'(t)\leq k - t\delta - \alpha_v$ and it sends proposals either
  to all nodes in $S(v)$ (i.e., to all active neighbors with an active
  edge to $v$) or it sends proposals to
  $k-t\delta-x_v'(t)\geq \alpha_v$ nodes in $S(v)$. Hence, in each
  phase $t$ in which $v$ sends proposals and in which $v$ does not
  send a proposal to $u$ although $u\in A(t)$ and $(u,v)$ is still
  active, there must be at least $\alpha_v$ neighbors
  $w\in S(v)\setminus\{u\}$ to which $v$ sends a proposal. Note that
  for each such neighbor $w$, we have
  $\deg_G(w)/\alpha_w\leq \deg_G(u)/\alpha_u$ (because in step 4,
  proposals are sent to active in-neighbors of smallest
  $\deg_G(w)/\alpha_w$ ratio first). The number of such phases $t$ for
  $t\leq (k-2\alpha_v)/\delta - 1$ in which $u\in A(t)$ and
  $v\not\in A(t)$, in which $(u,v)$ is active and $v$ does not send
  a proposal to $u$ can therefore be upper bounded by
  \begin{eqnarray}
    \left\lfloor\frac{(\deg_G(v)-1)\cdot\left\lceil\frac{\deg_G(u)}{\alpha_u}\right\rceil}{\alpha_v}\right\rfloor
    & \le & \left(\frac{\deg_G(v)-1}{\alpha_v}\right)\cdot \left(\frac{\deg_G(u)}{\alpha_u}+1\right)\nonumber\\
    & \le & \frac{\deg_G(u)\cdot \deg_G(v)}{\alpha_u\cdot\alpha_v} + \frac{\deg_G(v)}{\alpha_v}. \label{eq:nofincreasephases}
  \end{eqnarray}
  Further, $v$ can send at most $\lfloor \deg_G(u)/\alpha_u\rfloor$ proposals to $u$ without receiving a token from $u$ (and thus such that $(u,v)$ remains active).
  The total number of phases $t$ for $t\leq k/\delta-1$ in which $u\in A(t)$ and
  $v\not\in A(t)$ and in which $(u,v)$ remains active can consequently be
  upper bounded by
  \[
    \frac{2\alpha_v}{\delta} + \frac{\deg_G(u)\cdot \deg_G(v)}{\alpha_u\cdot\alpha_v} + \frac{\deg_G(v)}{\alpha_v} + \frac{\deg_G(u)}{\alpha_u}.
  \]
  In each of those phases, $y_u(t)-y_v(t)$ increases by $\delta$, which directly implies the claim of the lemma.
\end{proof}

We are now ready to prove that our algorithm solves the generalized token dropping game. The following theorem follows almost directly from \Cref{lemma:tokenbounds}, for space reasons, the proof appears in \Cref{app:deferredtoken}.

\begin{theorem}\label{thm:tokenbound}
  At the end of the above algorithm, for every $v\in V$, let $\tau(v)$ be the number of tokens at node $v$.
  If for all $v\in V$, $\alpha_v\geq \delta$, the above algorithm has a time complexity of $O(k/\delta)$ and at the end of the algorithm, for every node $v\in V$, we have $\tau(v)\leq k$ and for every edge $(u,v)$, either $(u,v)$ is passive or
  \[
    \tau(u)-\tau(v) \leq 2(\alpha_u+\alpha_v) + \left(\frac{\deg_G(u)\cdot \deg_G(v)}{\alpha_u\cdot\alpha_v} + \frac{\deg_G(u)}{\alpha_u} +
      \frac{\deg_G(v)}{\alpha_v}\right)\cdot\delta.
  \]
\end{theorem}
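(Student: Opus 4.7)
The round complexity is immediate: each phase consists of a constant number of communication steps (computing $A(t)$, sending proposals, accepting them, forwarding tokens), and the algorithm executes $\lfloor k/\delta\rfloor-1 = O(k/\delta)$ phases. The bound $\tau(v)\le k$ is also immediate from \Cref{lemma:tokenbounds}, since at termination $\tau(v) = x_v(T)+y_v(T) \le k$ for $T = \lfloor k/\delta\rfloor-1$.

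The interesting part is the slack bound on active edges. The natural decomposition is
\[
\tau(u)-\tau(v) \;=\; \bigl(x_u(T)-x_v(T)\bigr) \;+\; \bigl(y_u(T)-y_v(T)\bigr),
\]
and the plan is to bound the two pieces with the two lemmas that were just proved. For the passive difference, provided $(u,v)$ is still active at time $T$, \Cref{lemma:passiveslackbound} applied at $t=T \le k/\delta-1$ immediately gives
\[
y_u(T)-y_v(T) \;\le\; 2\alpha_v + \left(\frac{\deg_G(u)\cdot\deg_G(v)}{\alpha_u\cdot\alpha_v}+\frac{\deg_G(u)}{\alpha_u}+\frac{\deg_G(v)}{\alpha_v}\right)\delta,
\]
which already contributes the entire $\delta$-term of the theorem, plus the $2\alpha_v$ summand.

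The remaining $2\alpha_u$ contribution has to come from the active difference. Here I would use the crude bound $x_u(T)-x_v(T)\le x_u(T)$ (since $x_v(T)\ge 0$) together with \Cref{lemma:tokenbounds}, which says $x_u(T)\le \max\{2\alpha_u,\, k - T\delta\}$. The key observation is that at the terminal phase $T = \lfloor k/\delta\rfloor - 1$ we have $k - T\delta \le 2\delta$, and because the theorem's hypothesis is exactly $\alpha_u\ge\delta$, this implies $k - T\delta \le 2\alpha_u$. Hence the maximum evaluates to $2\alpha_u$ and we obtain $x_u(T)-x_v(T)\le 2\alpha_u$.

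Summing the two bounds yields the advertised inequality. I do not anticipate any real obstacle: both ingredients are already in place, and the only subtlety is checking that the running time $T$ is small enough that the $k-T\delta$ branch of \Cref{lemma:tokenbounds} is dominated by $2\alpha_u$, which is precisely why the assumption $\alpha_v\ge\delta$ for all $v$ is imposed. If the edge $(u,v)$ became passive at some earlier point, the theorem's conclusion is vacuously satisfied, so there is nothing to check in that case.
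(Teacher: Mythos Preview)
Your proposal is correct and matches the paper's own proof essentially step for step: both use \Cref{lemma:passiveslackbound} for the passive-token difference and \Cref{lemma:tokenbounds} (together with $k-T\delta\le 2\delta\le 2\alpha_u$) to bound $x_u(T)$. The only cosmetic difference is that the paper drops $x_v(T)$ immediately, writing $\tau(u)-\tau(v)\le x_u(T)+y_u(T)-y_v(T)$, whereas you first display the full decomposition and then discard $x_v(T)\ge 0$.
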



\section{Generalized Defective 2-Edge Coloring}\label{sec:gen-def-2-ec}
At the core, our edge coloring algorithms are based on the following simple idea. We partition the space of possible colors into two parts and each edge commits to choosing a color from one of the two parts. The two parts can then be solved recursively. Because edges that pick colors from disjoint color spaces cannot conflict with each other, the two parts can be colored recursively in parallel. The task of splitting the set of edges into two parts can be formulated as a defective edge coloring problem as follows.

\begin{definition}[Generalized Defective $2$-Edge Coloring]\label{def:defective2coloring}
  Given values $\eps\geq 0$ and $\beta\geq 0$, a graph $G=(V,E)$, and parameters $\lambda_e\in[0,1]$ for all edges $e\in E$, a generalized $(1+\eps,\beta)$-relaxed defective $2$-edge coloring of $G$ is an assignment of colors red and blue to the edges $e\in E$ such that for every edge $e\in E$:
  \begin{itemize}
  \item If $e$ is colored red, the number of neighboring red edges is $\leq(1+\eps)\cdot\lambda_e\cdot \deg_G(e) + \lambda_e\beta$.
  \item If $e$ is colored blue, the number of neighboring blue edges is $\leq (1+\eps)\cdot(1-\lambda_e)\cdot \deg_G(e) + (1-\lambda_e)\beta$.
  \end{itemize}
\end{definition}

We will next show how we can solve a given generalized defective $2$-edge coloring instance in two-colored bipartite graphs by using the token dropping game of \Cref{sec:tokendropping}. We next show how to transform the generalized defective $2$-coloring problem to make it more directly amenable to applying the token dropping game. We first define the notion of generalized balanced edge orientations. For convenience, we only give a definition for bipartite graphs.

\begin{definition}[Generalized Balanced Edge Orientation]\label{def:balancedorientation}
  Assume that we are given values $\eps\geq 0$ and $\beta\geq 0$, a bipartite graph $G=(U\dot{\cup}V,E)$, parameters $\eta_e\in\mathbb{R}$ for all edges $e=(u,v)\in E$, $u\in U$, $v\in V$, and an orientation on the edges of $G$. For each node $w\in U\cup V$, let $x_w$ be the number of edges of $w$ that are oriented towards $w$. The orientation is called a \emph{generalized $(\eps,\beta)$-balanced edge orientation} of $G$ if the following holds.  For every edge $(u,v)\in E$,
  \begin{itemize}
  \item[(I)] If $e$ is oriented from $u$ to $v$, then $x_v-x_u\leq \eta_e + 1 + \frac{\eps}{2}\cdot\deg_G(e)+\beta$.
  \item[(II)] If $e$ is oriented from $v$ to $u$, then $x_u-x_v\leq -\eta_e + 1 + \frac{\eps}{2}\cdot\deg_G(e)+\beta$.
  \end{itemize}
\end{definition}

The following lemma follows almost directly from the above definitions. The proof appears in \Cref{app:deferreddefective}.
\begin{lemma}\label{lemma:orientation2defective}
  Assume that we are given a bipartite graph $G=(U\dot{\cup} V,E)$, a
  parameter $\eps\geq 0$, and parameters $\lambda_e\in[0,1]$ for all
  $e\in E$. For every edge $e=(u,v)$, $u\in U$, $v\in V$, we define
  \begin{equation}
    \label{eq:def_eta}
    \eta_e := 1 -2\lambda_e - (1-\lambda_e)\cdot \deg_G(u) +\lambda_e\cdot\deg_G(v) + \eps\cdot\left(\lambda_e- \frac{1}{2}\right)\cdot\deg_G(e) + (2\lambda_e -1)\beta
  \end{equation}
  A generalized $(\eps,\beta)$-balanced edge orientation of $G$ w.r.t.\ the edge parameters $\eta_e$ can be turned into a solution to the given generalized $(1+\eps,2\beta)$-relaxed defective $2$-edge coloring w.r.t.\ the original edge parameters $\lambda_e$ by coloring edges red that are oriented from $U$ to $V$ and by coloring edges blue that are oriented from $V$ to $U$.
\end{lemma}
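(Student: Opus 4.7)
The plan is to do a direct case analysis on how the orientation colors each edge, count neighbors of the same color in terms of the $x_w$ values, and then verify by substitution that the specific choice of $\eta_e$ in~\eqref{eq:def_eta} makes the balanced-orientation bound collapse exactly into the required defective-coloring bound.

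Fix an edge $e=\{u,v\}$ with $u\in U$ and $v\in V$. Suppose first that $e$ is oriented from $u$ to $v$, so $e$ is colored red. The red edges adjacent to $e$ at $u$ are exactly the other edges of $u$ that are oriented away from $u$, giving $\deg_G(u)-x_u-1$ of them, and the red edges adjacent to $e$ at $v$ are the other edges of $v$ oriented toward $v$, giving $x_v-1$ of them. The total number of neighboring red edges is therefore
\begin{equation*}
  (\deg_G(u)-x_u-1)+(x_v-1) \;=\; \deg_G(u)+(x_v-x_u)-2.
\end{equation*}
Applying condition~(I) of \Cref{def:balancedorientation} to bound $x_v-x_u$, this is at most
\begin{equation*}
  \deg_G(u)+\eta_e-1+\tfrac{\eps}{2}\deg_G(e)+\beta.
\end{equation*}

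The key step is now to substitute the definition of $\eta_e$ from~\eqref{eq:def_eta} and simplify. The terms $-2\lambda_e$, $-(1-\lambda_e)\deg_G(u)$, $\lambda_e\deg_G(v)$, $\eps(\lambda_e-1/2)\deg_G(e)$ and $(2\lambda_e-1)\beta$ have been chosen precisely so that, after collecting, $\deg_G(u)$ combines with $-(1-\lambda_e)\deg_G(u)$ to leave $\lambda_e\deg_G(u)$, the $-1$ cancels with the $+1$ inside $\eta_e$ against $-2\lambda_e$ to leave $-2\lambda_e$, and the $\eps$-terms sum to $\eps\lambda_e\deg_G(e)$. Using the identity $\deg_G(e)=\deg_G(u)+\deg_G(v)-2$, the whole expression collapses to
\begin{equation*}
  \lambda_e\bigl(\deg_G(u)+\deg_G(v)-2\bigr)+\eps\lambda_e\deg_G(e)+2\lambda_e\beta
  \;=\;(1+\eps)\lambda_e\deg_G(e)+\lambda_e\cdot 2\beta,
\end{equation*}
which is exactly the bound required by \Cref{def:defective2coloring} for a red edge with slack $2\beta$.

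The case that $e$ is oriented from $v$ to $u$, and hence colored blue, is entirely symmetric: the number of neighboring blue edges equals $\deg_G(v)+(x_u-x_v)-2$, we apply condition~(II) to obtain an upper bound in terms of $-\eta_e$, and the same algebraic substitution (now with the signs of the $\eta_e$-terms flipped) yields $(1+\eps)(1-\lambda_e)\deg_G(e)+(1-\lambda_e)\cdot 2\beta$. I do not expect any real obstacle beyond keeping the signs straight in this symmetric computation; the definition of $\eta_e$ has essentially been reverse-engineered to make both equalities hold, so the only subtlety is noticing that one ``$\lambda_e$-weighted'' choice of $\eta_e$ simultaneously enforces the red bound and its $(1-\lambda_e)$-counterpart for blue, which is why $\eta_e$ must contain terms that are asymmetric in $\lambda_e$ versus $1-\lambda_e$ (the $-2\lambda_e$, the asymmetric degree combination, and the $(2\lambda_e-1)\beta$).
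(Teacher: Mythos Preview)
Your proposal is correct and follows essentially the same argument as the paper's proof: count same-colored neighbors via the $x_w$ values to get $\deg_G(u)+(x_v-x_u)-2$ (resp.\ $\deg_G(v)+(x_u-x_v)-2$), apply condition~(I) (resp.\ (II)) of \Cref{def:balancedorientation}, and then substitute~\eqref{eq:def_eta} and simplify using $\deg_G(e)=\deg_G(u)+\deg_G(v)-2$. The only cosmetic difference is that the paper spells out the blue-case algebra explicitly rather than appealing to symmetry.
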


We next show how to compute a generalized balanced edge orientation (as given by \Cref{def:balancedorientation}) and thus in combination with \Cref{lemma:orientation2defective} a generalized defective $2$-edge coloring. Assume that we have a bipartite $2$-colored graph $G=(U\cup V,E)$ with edge parameters $\eta_e\in\mathbb{R}$. We will compute a generalized balanced edge orientation with parameter $\eta_e$ by reducing it to a sequence of instances of the token dropping game. More concreteley, the algorithm has a parameter $\nu>0$ and it runs in a sequence of phases $\phi=1,2,3,\dots,O\big(\frac{\log{\Delta}}{\nu}\big)$. At the start, all edges of $G$ are unoriented and in each phase, a subset of the unoriented edges become oriented. We define $F_{\phi}$ as the set of edges that get oriented in phase $\phi$, and $F_{<\phi}$ as the set of edges that get oriented before phase $\phi$. By writing $\deg_{F_{<\phi}}(v)$ we refer to the degree of node $v$ in the graph induced by the edges $F_{<\phi}$.
 In each phase, we use one instance of the token dropping game to make sure that the set of all oriented edges satisfies inequalities (I) and (II) of \Cref{def:balancedorientation} (for an appropriate value of $\eps$). For an edge $e\in E$ and a phase $\phi\geq 0$, we use $d(e,\phi)$ to denote the number of unoriented neighboring edges of $e$ at the end of phase $\phi$. For convenience, we also use $d(e,0)=\deg_G(e)$ to denote the number of unoriented neighboring edges of $e$ at the start. We further define $\bar{\Delta}:=2\Delta-2$ as an upper bound on the maximum edge degree in $G$. We further set the parameter $\nu$ such that
\begin{equation}\label{eq:def_nu}
  0 < \nu \leq \frac{1}{8}.
\end{equation}
For every node $v$ and every phase $\phi\geq 1$, let $x_v(\phi)$ denote the number of edges that are oriented towards $v$ at the end of phase $\phi$. For convenience, we also define $x_v(0)=0$ as at the beginning all edges are unoriented and therefore no edges are oriented towards $v$. For each node $v$, we further define
\begin{equation}\label{eq:degmin_alpha}
  \degmin(v) := \min_{e\in F_{<\phi} : v\in e} \deg_G(e)\qquad\text{and}\qquad
  \phalpha{v} := \max\set{1,\frac{1}{4}\cdot \frac{\nu^2}{\ln{\bar{\Delta}}}\cdot (\degmin(v)+1)}.
\end{equation}
We assume that the nodes of $G$ know if they are in $U$ or in $V$ (i.e., we assume that the bipartite graph $G$ is equipped with a $2$-vertex coloring). The algorithm in phase $\phi\geq 1$ works as follows:
\begin{enumerate}
\item Let $E_{\phi}\subseteq E$ be the edges $e\in E$ that are still unoriented at the beginning of phase $\phi$ and for which $d(e,\phi-1)>(1-\nu)^\phi\bar{\Delta}$.
\item For every edge $e=(u,v)\in E_{\phi}$ with $u\in U$ and $v\in V$,  $e$ sends a proposal to $v$ if $x_v-x_u\leq \eta_e$ and it sends a proposal to $u$ otherwise.
\item We set $k_{\phi} := \big\lceil \nu(1-\nu)^{\phi-1}\bar{\Delta}\big\rceil$. Every node $u\in U\cup V$ that receives at most $k_{\phi}$ proposals from its edges in $E_{\phi}$ accepts all those proposals and every node $u\in U\cup V$ that receives more that $k_{\phi}$ proposals from its edges in $E_{\phi}$ accepts an arbitrary subset of $k_{\phi}$ of them. 
\item Let $F_{\phi}\subseteq E_{\phi}$ be the set of edges for which the proposal gets accepted. The edges $e\in F_{\phi}$ will be the ones that get newly oriented in phase $\phi$. For each edge $(u,v)\in F_{\phi}$, the edge is oriented from $u$ to $v$ if $(u,v)$'s proposal was sent to and accepted by $v$ and the edge is oriented from $v$ to $u$ otherwise.
\item Let $F_{<\phi}:=\bigcup_{i=1}^{\phi-1} F_{\phi}$ be the set of edges that get oriented before phase $\phi$. We define a subset $F'_{<\phi}\subseteq F_{<\phi}$ of those edges as follows. An edge $e=(u,v)\in F_{<\phi}$ ($u\in U$, $v\in V$) is included in $F'_{<\phi}$ if either $e$ is oriented from $u$ to $v$ and $x_v(\phi-1)-x_u(\phi-1)>\eta_e $ or if $e$ is oriented from $v$ to $u$ and $x_u(\phi-1)-x_v(\phi-1)>-\eta_e $.
\item We now run an instance of the token dropping game on the graph $G_{\phi}=(U\cup V, F'_{< \phi})$ (i.e., on the subgraph of $G$ induced by the edges in $F'_{< \phi}$), where each edge in $F'_{< \phi}$ is directed in the opposite direction of its current orientation. Each node $u\in U\cup V$ uses the parameter $\phalpha{u}$ as fixed in \Cref{eq:degmin_alpha}. Further, the initial number of tokens of each node $u\in U\cup V$ is equal to the number of proposals from its edges in $E_{\phi}$, $u$ has accepted in the above step 3. Finally, the parameter $\delta_{\phi}$ is set to
  \begin{equation}\label{eq:delta_def}
    \delta_{\phi} := \max\set{1, \left\lfloor \frac{1}{16}\cdot \frac{\nu^6}{\ln^3\bar{\Delta}}\cdot (1-\nu)^{\phi-1}\bar{\Delta}\right\rfloor}.
  \end{equation}
\item To conclude phase $\phi$, we now update the orientation of the edges in $F_{< \phi}$ as follows. We switch the orientation of each edge over which a token is moved in the above token dropping game instance of step 6. All other edges in $F_{< \phi}$ keep their orientations.
\end{enumerate}

We first show that the maximum edge degree of the unoriented part of $G$ decreases exponentially as a function of the number of phases. The proof of \Cref{lemma:degreereduction} is simple and appears in \Cref{app:deferreddefective}.

\begin{lemma}\label{lemma:degreereduction}
  At the end of phase $\phi\geq 1$ of the above algorithm, we have $d(e,\phi)\leq (1-\nu)^{\phi}\bar{\Delta}$ for every edge $e\in E\setminus F_{\leq\phi}$, that is, for every edge that is still unoriented after phase $\phi$.
\end{lemma}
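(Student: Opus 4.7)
The plan is to prove \Cref{lemma:degreereduction} by induction on $\phi$, with the base case being the trivial observation that $d(e,0)=\deg_G(e)\le\bar{\Delta}=(1-\nu)^0\bar{\Delta}$ for every $e\in E$. For the inductive step, I would fix a phase $\phi\ge 1$ and an edge $e\in E\setminus F_{\le\phi}$, i.e., an edge still unoriented at the end of phase $\phi$, and split into two cases according to whether $e$ participated in phase $\phi$.

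In the first case, $e\notin E_\phi$. By the definition of $E_\phi$ in step~1, this means $d(e,\phi-1)\le(1-\nu)^\phi\bar{\Delta}$, and since orienting edges only decreases the number of unoriented neighbors, $d(e,\phi)\le d(e,\phi-1)\le(1-\nu)^\phi\bar{\Delta}$, as desired. In the second case, $e\in E_\phi$ but $e$ remained unoriented after the phase. Say $e=(u,v)$ sent its proposal to $v$ (the argument for $u$ is symmetric). Since $e\notin F_\phi$, $v$ did not accept $e$'s proposal, so by step~3, $v$ must have received strictly more than $k_\phi$ proposals in phase~$\phi$ and accepted exactly $k_\phi$ of them (none of which is $e$'s). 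Each accepted proposal comes from an edge in $E_\phi$ incident to $v$ and distinct from $e$; every such edge is therefore an unoriented neighbor of $e$ that gets oriented in phase~$\phi$. Hence at least $k_\phi$ neighbors of $e$ are moved from the unoriented set to $F_\phi$ during phase~$\phi$, giving $d(e,\phi)\le d(e,\phi-1)-k_\phi$.

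To finish, I invoke the induction hypothesis $d(e,\phi-1)\le(1-\nu)^{\phi-1}\bar{\Delta}$ (valid because $e$ was already unoriented after phase~$\phi-1$) together with the bound $k_\phi=\bigl\lceil \nu(1-\nu)^{\phi-1}\bar{\Delta}\bigr\rceil\ge \nu(1-\nu)^{\phi-1}\bar{\Delta}$ from the definition in step~3, and compute
\[
d(e,\phi)\;\le\;(1-\nu)^{\phi-1}\bar{\Delta}-\nu(1-\nu)^{\phi-1}\bar{\Delta}\;=\;(1-\nu)^\phi\bar{\Delta}.
\]
This closes the induction. The argument is essentially book-keeping; the only subtle point is justifying that the $k_\phi$ accepted proposals really correspond to distinct unoriented neighbors of $e$, which follows immediately from the fact that they all share endpoint $v$ with $e$ and belong to $E_\phi$. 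I do not anticipate any genuine obstacle here beyond being careful with the case distinction between $e\notin E_\phi$ and $e\in E_\phi$ with its proposal rejected.
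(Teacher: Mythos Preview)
Your proof is correct and follows essentially the same approach as the paper: induction on $\phi$, with the case distinction between $e\notin E_\phi$ (where the degree bound is immediate from the definition of $E_\phi$) and $e\in E_\phi$ with its proposal rejected (where the node that rejected must have accepted $k_\phi$ other proposals, each orienting a previously unoriented neighbor of $e$). Your write-up is slightly more explicit about the base case and about why the $k_\phi$ accepted proposals correspond to distinct unoriented neighbors of $e$, but the argument is identical in substance.
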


To analyze  the quality of the produced edge orientation, we define the following quantities for every edge $e\in E$.
\begin{equation}
  \label{eq:k_xi_edge}
  k_e := \left\lceil\frac{\nu}{1-\nu}\cdot\deg_G(e)\right\rceil\qquad\text{and}\qquad
  \xi_e := \frac{5}{2}\cdot \frac{\nu}{\ln\bar{\Delta}}\cdot k_e +
             28\cdot \frac{\ln^2\bar{\Delta}}{\nu^4}.
\end{equation}

\begin{lemma}\label{lemma:orientationgrowth}
  After $\phi$ phases of the above algorithm, for every edge $e=(u,v)\in E$ ($u\in U$, $v\in V$), it holds that either: 
  \begin{itemize}
  \item $e$ is unoriented, or
  \item $e$ is oriented from $u$ to $v$ and $x_v(\phi)-x_u(\phi)\leq \eta_e + k_e + \phi\cdot\xi_e$, or
  \item $e$ is oriented from $v$ to $u$ and $x_u(\phi)-x_v(\phi)\leq -\eta_e + k_e + \phi\cdot\xi_e$.
  \end{itemize}
\end{lemma}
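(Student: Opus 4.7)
The plan is an induction on $\phi$, built on the identity that within phase $\phi$ the only changes to $x_w$ come from accepted proposals (step~4) and edge switches (step~7), both of which are tracked exactly by the token dropping game, so
\[
  x_w(\phi) = x_w(\phi-1) + \tau(w) \qquad \text{for every } w \in U \cup V,
\]
where $\tau(w)$ is $w$'s final token count in the game run in phase $\phi$. Indeed, if $p(w)$ is the number of step-3 proposals that $w$ accepts, then after step~4 the value is $x_w(\phi-1)+p(w)$; and because an incoming token to $w$ in the game corresponds to switching a currently outgoing edge to incoming (contributing $+1$ to $x_w$) and an outgoing token to the opposite switch (contributing $-1$), the net change in step~7 is exactly $\tau(w)-p(w)$.

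The base case $\phi=0$ is vacuous. For the inductive step, fix an edge $e=(u,v)$, $u\in U$, $v\in V$, and assume (by symmetry) that its orientation at the end of phase $\phi$ is $u\to v$; the goal is $x_v(\phi)-x_u(\phi)\leq\eta_e+k_e+\phi\xi_e$. I split into three cases.

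\textbf{Case A ($e\in F_\phi$, newly oriented).} By step~2 the proposal was sent to $v$, which can happen only if $x_v(\phi-1)-x_u(\phi-1)\leq\eta_e$. \Cref{lemma:tokenbounds} gives $\tau(v)\leq k_\phi$ and $\tau(u)\geq 0$, and $e\in E_\phi$ implies $(1-\nu)^\phi\bar{\Delta}<d(e,\phi-1)\leq\deg_G(e)$, from which $k_\phi\leq k_e$ follows by monotonicity of the ceiling. The claim is immediate.

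\textbf{Case B ($e$ oriented $u\to v$ before phase $\phi$, not switched).} If $e\notin F'_{<\phi}$, the non-violating condition gives $x_v(\phi-1)-x_u(\phi-1)\leq\eta_e$ directly, and $\tau(v)-\tau(u)\leq\tau(v)\leq k_\phi\leq k_e$ closes the case. If $e\in F'_{<\phi}$, then $e$ appears in the game with reversed direction $(v,u)$, and since no token moves across this game edge (else $e$ would be switched), the edge is still active at the end of the game. \Cref{thm:tokenbound} then yields
\[
  \tau(v)-\tau(u)\leq 2(\alpha_u+\alpha_v)+\Bigl(\tfrac{\deg_G(u)\deg_G(v)}{\alpha_u\alpha_v}+\tfrac{\deg_G(u)}{\alpha_u}+\tfrac{\deg_G(v)}{\alpha_v}\Bigr)\delta_\phi.
\]
Plugging in \Cref{eq:degmin_alpha} (using that $e\in F_{<\phi}$ forces $\degmin(w)+1\geq\deg_G(w)$ for $w\in\{u,v\}$, so $\alpha_w\geq\tfrac{\nu^2}{4\ln\bar{\Delta}}\deg_G(w)$ and $\deg_G(w)/\alpha_w\leq 4\ln\bar{\Delta}/\nu^2$), plugging in \Cref{eq:delta_def}, and using that $e$ was oriented in some phase $\phi'\leq\phi-1$, which forces $(1-\nu)^{\phi-1}\bar{\Delta}\leq(1-\nu)^{\phi'}\bar{\Delta}<\deg_G(e)\leq(1-\nu)k_e/\nu$, a short computation shows that the slack is at most $\xi_e$. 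Combined with the inductive hypothesis $x_v(\phi-1)-x_u(\phi-1)\leq\eta_e+k_e+(\phi-1)\xi_e$, the claim follows.

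\textbf{Case C ($e$ was oriented $v\to u$ before phase $\phi$, switched in phase $\phi$).} A switch requires $e\in F'_{<\phi}$, and the violating condition on the old orientation reads $x_u(\phi-1)-x_v(\phi-1)>-\eta_e$, equivalently $x_v(\phi-1)-x_u(\phi-1)<\eta_e$. Combining this with $\tau(v)-\tau(u)\leq\tau(v)\leq k_\phi\leq k_e$ (where $k_\phi\leq k_e$ holds by the same argument as in Case~A, applied to the phase when $e$ was first oriented), the claim again follows.

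I expect Case~B to be the main obstacle: one must verify that the specific parameter choices in \Cref{eq:degmin_alpha} and \Cref{eq:delta_def} are tuned precisely so that the slack produced by \Cref{thm:tokenbound} matches the per-phase budget $\xi_e=\tfrac{5}{2}\tfrac{\nu}{\ln\bar{\Delta}}k_e+28\tfrac{\ln^2\bar{\Delta}}{\nu^4}$; this is a routine but delicate calculation balancing a $\tfrac{\nu^2}{\ln\bar{\Delta}}(\deg_G(e)+1)$ contribution from $2(\alpha_u+\alpha_v)$ against a $\tfrac{16\ln^2\bar{\Delta}}{\nu^4}\delta_\phi$ contribution from the $\deg/\alpha$ terms. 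All remaining cases reduce to the crude bound $\tau(v)\leq k_\phi\leq k_e$ of \Cref{lemma:tokenbounds} together with the (non-)violating dichotomy defining $F'_{<\phi}$.
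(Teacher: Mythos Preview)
Your proposal is correct and follows essentially the same approach as the paper: both arguments hinge on the identity $x_w(\phi)=x_w(\phi-1)+\tau(w)$ (which the paper derives while proving its inequality \eqref{eq:xv_bounds}), followed by the same case analysis over whether $e$ is newly oriented, outside $F'_{<\phi}$, switched, or active-but-unswitched in the game, with \Cref{thm:tokenbound} supplying the $\xi_e$-bounded slack in the last case. The only cosmetic differences are that the paper writes four cases where you write three (your Case~B absorbs two of theirs), and the paper bounds the game degrees by $\deg_{F_{<\phi}}$ rather than $\deg_G$---both are valid upper bounds on $\deg_{G_\phi}$ and lead to the same $4\ln\bar{\Delta}/\nu^2$ ratio via $\degmin(w)+1$; you should also note, as the paper does, that $\phalpha{w}\geq\delta_\phi$ so that \Cref{thm:tokenbound} applies.
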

\begin{proof}
  We prove the lemma by induction on the number of phase $\phi$. At
  the beginning, all edges are unoriented and therefore clearly for
  $\phi=0$, the claim of the lemma holds. For the induction step,
  assume that $\phi\geq 1$.  Consider some edge $(u,v)\in E$ with
  $u\in U$ and $v\in V$ and assume that $e$ is oriented at the end of
  phase $\phi$ (as otherwise, the claim of the lemma clearly holds for
  $e$). We first show that for every node $v\in U\cup V$, we have
  \begin{equation}
    \label{eq:xv_bounds}
    x_v(\phi-1)\leq x_v(\phi) \leq x_v(\phi-1) + k_{\phi}.
  \end{equation}
  To see why \Cref{eq:xv_bounds} holds, first consider the number of
  incoming edges at node $v$ in the middle of phase $\phi$, right
  before the we run the token dropping instance of phase $\phi$ in
  step 6 of the above algorithm. As the orientation of the edges that
  have been oriented prior to phase $\phi$ can only be changed during
  token dropping game, the number of incoming edges of $v$ at this
  point is equal to $x_v(\phi-1)$ plus the number of newly oriented
  edges of $v$ that are oriented towards $v$. Let us use $y_v$ to
  denote this number of those newly oriented edges towards $v$. Note
  that $y_v$ is exactly equal to the number of proposals $v$ accepts
  in step 3 of the above algorithm. Note also that the initial number
  of tokens of node $v$ in the token dropping game of phase $\phi$ is
  equal to $y_v$. When running token dropping, whenever moving a token
  from a node $u$ to a node $u'$, an edge that was previously oriented
  from $u'$ to $u$ is reoriented from $u$ to $u'$. Hence, for every
  token that node $v$ receives in the token dropping game, the total
  number of edges oriented to $v$ increases by $1$ and for every token
  that $v$ moves to a neighbor, the total number of edges oriented to
  $v$ decreases by $1$. By induction on the steps of the token
  dropping game, the total number of incoming edges at node $v$ at the
  end of phase $\phi$ is therefore exactly $x_v(\phi-1)$ plus the
  number of tokens at node $v$ at the end of the token dropping game
  instance of phase $\phi$. Because by definition of the token
  dropping game, the number of tokens at each node is always in
  $\set{0,\dots,k_{\phi}}$, \Cref{eq:xv_bounds} follows.

  Recall that if an edge $e$ gets oriented in a phase $\phi'$, we have $\deg_G(e)\geq d(e,\phi'-1)\geq (1-\nu)^{\phi'}\bar{\Delta}$. For every $e$ that gets oriented in or before phase $\phi$, we therefore have $\deg_G(e)\geq (1-\nu)^{\phi}\bar{\Delta}$. This means that for every edge $e$ that is oriented by the end of phase $\phi$, we in particular have
  \begin{equation}
    \label{eq:ke_bound}
    k_e = \left\lceil\frac{\nu}{1-\nu}\cdot \deg_G(e)\right\rceil \geq
    \big\lceil \nu(1-\nu)^{\phi-1}\bar{\Delta}\big\rceil.
  \end{equation}
  For the induction step of our main induction (on $\phi$), we distinguish $4$ cases:

  \paragraph{Edge $e$ gets newly oriented in phase $\phi$:} Edge $e$ gets oriented from $u$ to $v$ in phase $\phi$ if $x_v(\phi-1)-x_u(\phi-1)\leq \eta_e$ and it gets oriented from $v$ to $u$ otherwise (i.e., if $x_u(\phi-1)-x_v(\phi-1)<-\eta_e$). Hence, if $e$ is oriented from $u$ to $v$, \Cref{eq:xv_bounds} implies that $x_v(\phi)-x_u(\phi)\leq \eta_e + k_{\phi}$ and if $e$ is oriented from $v$ to $e$, \Cref{eq:xv_bounds} implies that $x_u(\phi)-x_v(\phi)\leq -\eta_e + k_{\phi}$. In both cases, the claim of the lemma follows together with 
  \[
    k_{\phi}=\big\lceil\nu\cdot(1-\nu)^{\phi-1}\bar{\Delta}\big\rceil
    \stackrel{\eqref{eq:ke_bound}}{\leq} k_e.
  \]
  
  \paragraph{Edge $e$ does not participate in the token dropping game of phase $\phi$:}
  For the remaining cases, we assume that $e$ was first oriented prior to phase $\phi$. 
  Since $e$ does not to participate in the token dropping in phase $\phi$, the orientation of $e$ does not change in phase $\phi$. For $e$ not to participate in the token dropping in phase $\phi$, we then also must have that either $e$ is oriented from $u$ to $v$ and $x_v(\phi-1)-x_u(\phi-1)\leq\eta_e$ or $e$ is oriented from $v$ to $u$ and $x_u(\phi-1)-x_v(\phi-1)\leq -\eta_e$. In both cases, the claim of the lemma follows directly by combining this with \Cref{eq:xv_bounds,eq:ke_bound}.

  \paragraph{A token is moved over $e$ in the token dropping game of phase $\phi$:} If prior to running the token dropping game of phase $\phi$, $e$ is oriented from $u$ to $v$, then $x_v(\phi-1)-x_u(\phi-1)>\eta_e$ and otherwise $x_u(\phi-1)-x_v(\phi-1)>-\eta_e$. Since a token is moved over $e$ in the token dropping game of phase $\phi$, the orientation of $e$ is switched in phase $\phi$. Hence, if before running token dropping in phase $\phi$, the edge $e$ is oriented from $u$ to $v$, at the end of the phase, $e$ is oriented from $v$ to $u$ and 
  \[
  x_u(\phi) - x_v(\phi) \stackrel{\eqref{eq:xv_bounds},\eqref{eq:ke_bound}}{\leq} 
  \underbrace{x_u(\phi-1) - x_v(\phi-1)}_{x_v(\phi-1)-x_u(\phi-1)>\eta_e} + k_e
  < -\eta_e + k_e.
  \]
  If before the token dropping instance of phase $\phi$, $e$ is oriented from $v$ to $u$, at the end of the phase, the edge is oriented from $u$ to $v$ and
  \[
  x_v(\phi) - x_u(\phi) \stackrel{\eqref{eq:xv_bounds},\eqref{eq:ke_bound}}{\leq} 
  \underbrace{x_v(\phi-1) - x_u(\phi-1)}_{x_u(\phi-1)-x_v(\phi-1)>-\eta_e} + k_e
  < \eta_e + k_e.
  \]
  The claim of the lemma therefore also follows in this case.

  \paragraph{No token is moved over $e$ in the token dropping game of phase $\phi$:} The last case to consider is the case where $e$ participates in the token dropping game, but where no token is moved over the edge. First assume that $e$ is oriented from $u$ to $v$, that is, the edge is directed from $v$ to $u$ in the token dropping game. Let $\tau_u$ and $\tau_v$ denote the number of tokens at node $u$ and $v$ at the end of the token dropping game instance of phase $\phi$. 
  Observe that
  \begin{align*}
    \phalpha{v} &= \max\set{1,\frac{1}{4}\cdot \frac{\nu^2}{\ln{\bar{\Delta}}}\cdot (\degmin(v)+1)}\\
    &\ge \max\set{1,\frac{1}{4}\cdot \frac{\nu^2}{\ln{\bar{\Delta}}}\cdot (1-\nu)^{\phi-1}\bar{\Delta}}\\
    &\ge  \max\set{1, \left\lfloor \frac{1}{16}\cdot \frac{\nu^6}{\ln^3\bar{\Delta}}\cdot (1-\nu)^{\phi-1}\bar{\Delta}\right\rfloor} = \delta_{\phi},
  \end{align*}
  where the first inequality holds because, for every $e$ that gets oriented before phase $\phi$, we have $\deg_G(e)\geq (1-\nu)^{\phi-1}\bar{\Delta}$.
  Hence, $\phalpha{v} \ge \delta_{\phi}$, and since no token is moved from $v$ to $u$ in the token dropping game, \Cref{thm:tokenbound} implies that
  \begin{eqnarray*}
    \tau_v - \tau_u
    & \leq &
             2(\phalpha{u} + \phalpha{v}) + \left(\frac{\deg_{F_{<\phi}}(u)\cdot \deg_{F_{<\phi}}(v)}{\phalpha{u}\cdot\phalpha{v}} + \frac{\deg_{F_{<\phi}}(u)}{\phalpha{u}} +
             \frac{\deg_{F_{<\phi}}(v)}{\phalpha{v}}\right)\cdot\delta_{\phi}\\
    & \leq &  \max\set{4,\frac{\nu^2}{2\ln\bar{\Delta}}\cdot (\degmin(u)+\degmin(v)+2)}+
             \left(\frac{16\ln^2\bar{\Delta}}{\nu^4} + \frac{8\ln\bar{\Delta}}{\nu^2}\right)\cdot \delta_{\phi}\\
    & \leq & \frac{\nu^2}{\ln\bar{\Delta}}\cdot\deg_G(e) + 4 +
             \left(\frac{\nu^2}{\ln\bar{\Delta}} + \frac{\nu^4}{2\ln^2\bar{\Delta}}\right)\cdot  (1-\nu)^{\phi-1}\bar{\Delta}
             + \frac{16\ln^2\bar{\Delta}}{\nu^4} + \frac{8\ln\bar{\Delta}}{\nu^2}\\
    & \leq & \frac{5\nu^2}{2\ln\bar{\Delta}}\cdot\deg_G(e) +
             28\cdot \frac{\ln^2\bar{\Delta}}{\nu^4}
  \end{eqnarray*}
  The second inequality follows from plugging in the definition of $\phalpha{u}$ and $\phalpha{v}$ (\Cref{eq:degmin_alpha}) and the fact that for all nodes $v\in V$, $\deg_{F_{<\phi}}(v)\leq \degmin(v)+1$. The third inequality follows from using that $\nu^2/\ln\bar{\Delta}\leq 1$ and from the definition of $\delta_{\phi}$ (\Cref{eq:delta_def}). In the fourth inequality, we use that since edge $e$ was first oriented in a phase before $\phi$, we must have $\deg_G(e)\geq (1-\nu)^{\phi-1}\bar{\Delta}$ and again by using that $\nu^2/\ln\bar{\Delta}\leq 1$.

  The induction step in the induction over the number of phases and thus the claim of the lemma now follows directly from the definitions of $k_e$ and of $\xi_e$ (\Cref{eq:k_xi_edge}).  If the edge $e$ is oriented from $v$ to $u$, then the statement holds for the same reasons.
\end{proof}

The following theorem can now be proven by \Cref{lemma:degreereduction} and by plugging together \Cref{lemma:orientationgrowth}, \Cref{eq:k_xi_edge}, and \Cref{thm:tokenbound}. The proof appears in \Cref{app:deferreddefective}.

\begin{theorem}\label{thm:orientation}
  Assume that we are given a bipartite graph $G=(U\dot{\cup}V, E)$, a value $\eps>0$, and edge parameters $\eta_e\in\mathbb{R}$ for all edges $e\in E$. There exists a constant $C>0$ such that if $\eps\leq 1$, there is an $O\big(\frac{\log^4\Delta}{\eps^6}\big)$-round distributed algorithm to compute a generalized $(\eps,\beta)$-balanced orientation of $G$ w.r.t.\ the edge parameters $\eta_e$, where $\beta=C\cdot \frac{\ln^3\bar{\Delta}}{\eps^5}$.
\end{theorem}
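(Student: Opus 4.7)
The plan is to run the phased algorithm described in this section for a suitable number of phases and parameter $\nu$, then combine the three ingredients already established: \Cref{lemma:degreereduction} to argue that every edge eventually becomes oriented, \Cref{lemma:orientationgrowth} (together with \Cref{eq:k_xi_edge}) to bound the imbalance of each edge after termination, and \Cref{thm:tokenbound} to bound the cost of each individual phase.

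First, I would choose $\nu = c\cdot \eps$ for a sufficiently small absolute constant $c$ (at most $1/8$ so that \Cref{eq:def_nu} is satisfied), and run the algorithm for $\Phi := \lceil \ln\bar{\Delta}/\nu\rceil + 1$ phases. By \Cref{lemma:degreereduction}, after $\Phi$ phases we have $d(e,\Phi)\leq (1-\nu)^{\Phi}\bar{\Delta}<1$, so every edge is oriented by the end. Then for each edge $e=(u,v)$, \Cref{lemma:orientationgrowth} yields, depending on the orientation, either $x_v(\Phi)-x_u(\Phi)\leq \eta_e + k_e + \Phi\cdot\xi_e$ or $x_u(\Phi)-x_v(\Phi)\leq -\eta_e + k_e + \Phi\cdot\xi_e$. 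It therefore suffices to show that $k_e + \Phi\cdot\xi_e \leq 1 + \tfrac{\eps}{2}\deg_G(e) + \beta$ with $\beta = C\cdot\ln^3\bar{\Delta}/\eps^5$ for an appropriate constant $C$.

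For that bound I would just plug in the definitions from \Cref{eq:k_xi_edge}. Since $\nu\leq 1/8$, we have $k_e \leq \tfrac{\nu}{1-\nu}\deg_G(e)+1 \leq 2\nu\deg_G(e)+1$, and $\xi_e \leq \tfrac{5}{2}\cdot \tfrac{\nu}{\ln\bar{\Delta}}k_e + 28\ln^2\bar{\Delta}/\nu^4$. Multiplying $\xi_e$ by $\Phi = O(\ln\bar{\Delta}/\nu)$ introduces an $O(1/\ln\bar{\Delta}\cdot \ln\bar{\Delta}) = O(1)$ factor on the first summand (so that $\Phi\cdot\tfrac{5\nu k_e}{2\ln\bar{\Delta}}$ remains $O(\nu\deg_G(e))+O(1)$) and an $O(\ln^3\bar{\Delta}/\nu^5)$ contribution from the second summand. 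Hence $k_e + \Phi\cdot\xi_e = O(\nu\deg_G(e)) + O(\ln^3\bar{\Delta}/\nu^5)$, and picking the constant $c$ in $\nu=c\eps$ small enough makes the degree-proportional term at most $(\eps/2)\deg_G(e)$ while absorbing the additive term into $\beta = C\ln^3\bar{\Delta}/\eps^5$. The orientation produced at termination therefore satisfies the conditions (I) and (II) of \Cref{def:balancedorientation}.

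Finally, I would bound the total round complexity. Each phase $\phi$ consists of a constant amount of local message passing together with one invocation of the token dropping algorithm on $G_\phi$, which by \Cref{thm:tokenbound} runs in $O(k_\phi/\delta_\phi)$ rounds (the hypothesis $\alpha_v\geq \delta_\phi$ was already verified inside the proof of \Cref{lemma:orientationgrowth}). When $(1-\nu)^{\phi-1}\bar{\Delta}\geq 16\ln^3\bar{\Delta}/\nu^6$, the floor defining $\delta_\phi$ is positive and $k_\phi/\delta_\phi = O(\ln^3\bar{\Delta}/\nu^5)$; otherwise $\delta_\phi = 1$ and $k_\phi = O(\ln^3\bar{\Delta}/\nu^5)$ as well, so in both regimes each phase costs $O(\ln^3\bar{\Delta}/\nu^5)$ rounds. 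Summing over $\Phi = O(\ln\bar{\Delta}/\nu)$ phases yields $O(\ln^4\bar{\Delta}/\nu^6) = O(\log^4\Delta/\eps^6)$ rounds in total, which matches the stated bound.

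The main obstacle will be the bookkeeping in the second paragraph: one has to juggle the definitions of $k_e$, $\xi_e$, $\delta_\phi$, $\alpha_v$ carefully so that a single choice $\nu=\Theta(\eps)$ simultaneously (i) keeps the degree-proportional error below $(\eps/2)\deg_G(e)$, (ii) lets the additive error collapse into $\beta = O(\log^3\bar{\Delta}/\eps^5)$, and (iii) is consistent with the requirement $\nu\leq 1/8$ in \Cref{eq:def_nu}. Once the parameter choice is pinned down, the rest is an assembly of results already proved in this section.
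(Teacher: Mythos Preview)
Your approach matches the paper's almost line for line, but there is one small gap. From $d(e,\Phi)<1$ you conclude that ``every edge is oriented by the end,'' and this does not follow: $d(e,\Phi)$ counts the unoriented \emph{neighbors} of an unoriented edge $e$, so $d(e,\Phi)=0$ only tells you that the remaining unoriented edges form a matching. In fact, by step~1 of the algorithm an unoriented edge with $d(e,\phi-1)=0$ satisfies $d(e,\phi-1)\leq(1-\nu)^{\phi}\bar{\Delta}$ for every $\phi$ and therefore never enters $E_\phi$, so it is never oriented by the phased procedure no matter how many phases you run.

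The paper deals with this exactly as you would expect: it stops after $\hat{\phi}=O(\log\Delta/\nu)$ phases, at which point every node has $O(1)$ incident unoriented edges, and then orients those leftover edges arbitrarily. This changes each $x_w$ by an additive $O(1)$, which is absorbed into the constant in $\beta$. With that one-line fix your argument is complete and coincides with the paper's proof (the paper takes $\nu=\eps/8$, which is just your $\nu=c\eps$ with a specific constant).
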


By combining \Cref{thm:orientation} and \Cref{lemma:orientation2defective}, we obtain the following.
\begin{corollary}\label{cor:generalized-defective-2-edge-coloring}
    Let $\varepsilon \le 1$. The generalized $(1+\varepsilon,\beta)$-relaxed defective $2$-edge coloring can be solved deterministically in the \CONGEST model in $O\big(\frac{\log^4 \Delta}{\varepsilon^6}\big)$ rounds, for $\beta = O\big(\frac{\log^3 \Delta}{\varepsilon^5}\big)$.
\end{corollary}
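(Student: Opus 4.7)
The plan is simply to chain Theorem~\ref{thm:orientation} and Lemma~\ref{lemma:orientation2defective} together on the given bipartite $2$-colored input. Given an instance of the generalized $(1+\varepsilon,\beta)$-relaxed defective $2$-edge coloring problem specified by a bipartite graph $G=(U \mathbin{\dot\cup} V, E)$ (with the bipartition known to the nodes) and edge parameters $\lambda_e \in [0,1]$, my first move would be to have the two endpoints of every edge $e=\{u,v\}$ locally compute the edge parameter $\eta_e$ as defined in~\eqref{eq:def_eta}. This requires only that $u$ and $v$ exchange their degrees $\deg_G(u)$ and $\deg_G(v)$ (one round of \CONGEST, since degrees fit in $O(\log n)$ bits), after which both endpoints can evaluate $\eta_e$ from $\deg_G(u)$, $\deg_G(v)$, $\deg_G(e)=\deg_G(u)+\deg_G(v)-2$, $\lambda_e$, $\varepsilon$, and the value $\beta$ that Theorem~\ref{thm:orientation} guarantees.

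Second, I would invoke Theorem~\ref{thm:orientation} with this $\varepsilon$ and these $\eta_e$'s. This produces, in $O(\log^4\Delta / \varepsilon^6)$ rounds, a generalized $(\varepsilon,\beta')$-balanced orientation of $G$ with respect to the $\eta_e$, where $\beta' = C \cdot \ln^3\bar\Delta / \varepsilon^5$ for an absolute constant $C$. Third, I would apply Lemma~\ref{lemma:orientation2defective}: color an edge red if it is oriented from $U$ to $V$, blue otherwise. The lemma then certifies that the resulting coloring is a generalized $(1+\varepsilon, 2\beta')$-relaxed defective $2$-edge coloring of $G$ with respect to the original $\lambda_e$'s. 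Setting $\beta := 2\beta'$ gives $\beta = O(\log^3 \bar\Delta / \varepsilon^5) = O(\log^3 \Delta / \varepsilon^5)$, since $\bar\Delta \le 2\Delta - 2$.

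Finally, I would verify that the entire construction fits in \CONGEST, which should be the only genuinely non-routine thing to check (and even this is mild). The computation of each $\eta_e$ uses only one round and $O(\log n)$-bit messages. Inside the algorithm underlying Theorem~\ref{thm:orientation}, the quantities passed over edges in every phase---proposals, token counts $x_v(\phi)$, and the messages inside each invocation of the token-dropping game of Section~\ref{sec:tokendropping}---are all bounded by $O(\Delta)$ in magnitude and hence encodable in $O(\log n)$ bits, and the algorithm of Theorem~\ref{thm:tokenbound} is itself a \CONGEST algorithm with that same bandwidth guarantee. The round count is therefore dominated by the $O(\log^4\Delta/\varepsilon^6)$ bound of Theorem~\ref{thm:orientation}, yielding exactly the complexity claimed by the corollary. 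The main (mild) obstacle is just keeping the bookkeeping straight between the $\beta$ appearing in Definition~\ref{def:defective2coloring}, the $\beta$ in Definition~\ref{def:balancedorientation}, and the doubling incurred in passing through Lemma~\ref{lemma:orientation2defective}; but once these match up, the corollary is an immediate composition.
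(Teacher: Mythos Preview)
Your proposal is correct and follows exactly the approach the paper intends: the corollary is stated immediately after the sentence ``By combining \Cref{thm:orientation} and \Cref{lemma:orientation2defective}, we obtain the following,'' and your write-up simply spells out that composition (with the additional, helpful observation that all messages fit in $O(\log n)$ bits so the construction is genuinely \CONGEST). There is nothing to add.
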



\section{$\boldsymbol{O(\Delta)}$-Edge Coloring in the CONGEST Model}
\label{sec:econgest}

In this section, we present an algorithm for solving the $O(\Delta)$-edge coloring problem in the \CONGEST model. We start by showing how to solve the problem on bipartite $2$-colored graphs. We will later show how to remove this restriction.

\begin{lemma}\label{thm:bipartitecongest}
  The $(2+\varepsilon)\Delta$-edge coloring problem can be solved in $O\big(\frac{\log^{11} \Delta}{\varepsilon^6}\big)$ deterministic rounds in the \CONGEST model in bipartite $2$-colored graphs, for any $1 \ge \varepsilon > 0$.
\end{lemma}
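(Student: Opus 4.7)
The plan is to apply \Cref{cor:generalized-defective-2-edge-coloring} recursively, using $\lambda_e = 1/2$ at every edge, to partition the edge set into $2^k$ subgraphs that use pairwise disjoint halves of the color palette. With $\lambda_e = 1/2$, one invocation splits the current palette into two equal halves and produces a red subgraph and a blue subgraph, each with maximum edge degree at most $\tfrac{1}{2}(1+\varepsilon')\bar\Delta + \tfrac{\beta}{2}$, where $\bar\Delta$ is the max edge degree before the split. Because the two resulting subgraphs use disjoint color palettes, the two recursive calls can be run in parallel, so the round cost per level is exactly the cost of one invocation of \Cref{cor:generalized-defective-2-edge-coloring}.

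I would then choose $\varepsilon' = \Theta(\varepsilon/\log\Delta)$ for each invocation, so that the multiplicative slack $(1+\varepsilon')^k$ stays bounded by $1+\tfrac{\varepsilon}{4}$ over the full recursion of depth $k$. The depth $k$ is chosen to be $\Theta(\log\Delta)$ with $2^k = \Delta/\poly\log\Delta$, which makes each leaf subgraph bipartite $2$-colored with max edge degree $((1+\varepsilon')/2)^k\bar\Delta + O(\beta) = \poly\log\Delta$ and color palette of size $C_k = (2+\varepsilon)\Delta/2^k = (2+\varepsilon)\poly\log\Delta$. With this choice of $\varepsilon'$, each invocation of \Cref{cor:generalized-defective-2-edge-coloring} costs $O(\log^4\Delta/\varepsilon'^6) = O(\log^{10}\Delta/\varepsilon^6)$ rounds, so the $k$ recursion levels together cost $O(\log^{11}\Delta/\varepsilon^6)$ rounds.

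For the base case, each leaf is a bipartite $2$-colored subgraph whose palette is a constant factor larger than its max edge degree, both being of size $\poly\log\Delta$. In this regime a routine per-leaf fast greedy step (iterating through the palette while each still-uncolored edge commits to the smallest color not used by an already-committed neighbor, exploiting the $2$-coloring of $U$ and $V$ to break ties between the two endpoints of an edge) finishes in $\poly\log\Delta$ rounds, which is absorbed into the overall $O(\log^{11}\Delta/\varepsilon^6)$ bound; since the $2^k$ leaves use disjoint palettes, they can all be colored in parallel without interference.

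The main obstacle I expect is the careful book-keeping of both the multiplicative slack $(1+\varepsilon')^k$ and the cumulative additive slack $\sum_{i<k}(1/2+\varepsilon'/2)^i\cdot \beta/2 = O(\beta)$: at every level these must fit comfortably inside $C_i - \bar\Delta_i - 1$, which pins down the allowable ranges of $\varepsilon'$ and $k$ in terms of $\varepsilon$ and $\log\Delta$ and forces the threshold choice $2^k = \Delta/\poly\log\Delta$. A secondary technical point is that the bipartite $2$-colored structure must be shown to be preserved under the split, but this is immediate since we only redistribute the edges among the two subgraphs and keep the same bipartition of $U$ and $V$ throughout the recursion, so every recursive call receives a legitimate bipartite $2$-colored instance as required by \Cref{cor:generalized-defective-2-edge-coloring}.
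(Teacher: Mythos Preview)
Your approach is essentially the paper's: recursively apply \Cref{cor:generalized-defective-2-edge-coloring} with $\lambda_e=\tfrac12$ and slack parameter $\chi=\Theta(\varepsilon/\log\Delta)$ for $k=\Theta(\log\Delta)$ levels, then finish the $2^k$ leaves (each of $\poly\log\Delta$ edge degree) in parallel using disjoint palette pieces, giving the $O(\log^{11}\Delta/\varepsilon^6)$ bound. Two small corrections: the leaf palette is only a $(1+\Theta(\varepsilon))$-factor (not a constant factor) larger than the leaf edge degree, though you only need $C_k\ge d+1$; and your greedy base-case sketch is not quite a well-defined algorithm---the paper instead invokes the known $O(d)$-round $(d+1)$-edge coloring of \cite{BEG18}, where the usual $\log^* n$ term disappears because the given $2$-vertex coloring replaces the initial $O(d^2)$-coloring step.
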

\begin{proof}[Proof Sketch]
  The high-level idea of the proof is the following. In \Cref{cor:generalized-defective-2-edge-coloring}, we show that a $(1+\eps', \beta)$-relaxed defective $2$-edge coloring can be solved in $\poly(\log(\Delta)/\eps')$ time in the \CONGEST model (for $\beta = O\big(\frac{\log^3 \Delta}{\varepsilon'^5}\big)$). As long as the maximum edge degree $\bar{\Delta}$ is sufficiently larger than $\beta/\eps'$, we can therefore compute a $2$-defective edge coloring for which the maximum defect is only by a $(1+\eps')$ factor larger than $\bar{\Delta}/2$. Choosing $\eps'\leq c\cdot \eps/\log\Delta$ for a sufficiently small constant $c$ and an integer $k\geq 1$, as long as $k\gg \beta/\eps'$, we can therefore recursively compute a defective $2^k$-edge coloring such that the maximum edge defect is at most $(1+\eps/2)\bar{\Delta}/2^k$. The required $(2+\eps)\Delta$-coloring of the given bipartite graph is then obtained by using $2^k$ disjoint color ranges for each of the $2^k$ graphs of maximum degree $(1+\eps/2)\bar{\Delta}/2^k$ that we get from the recursive defective coloring. A full proof of the claim appears on \Cref{app:deferredcongest}.
\end{proof}

In order to solve the problem in general graphs, we make use of the following lemma, that follows directly from results presented in~\cite{barenboim14distributed} (a proof is also shown in \Cref{app:deferredcongest}).
\begin{lemma}\label{lem:d4c}
	The $(\varepsilon \Delta + \lfloor \Delta / 2\rfloor)$-defective vertex $4$-coloring problem, given an $O(\Delta^2)$-vertex coloring, can be solved in $O(1 / \varepsilon^2)$ rounds in the \CONGEST model.
\end{lemma}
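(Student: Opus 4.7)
The plan is to invoke the defective-coloring toolkit of~\cite{barenboim14distributed} and then post-process its output. The key observation is that the target --- a $4$-coloring with defect $\lfloor\Delta/2\rfloor + \varepsilon\Delta$ --- can be obtained from a defective $2$-coloring with the \emph{same} defect simply by splitting each of the two color classes arbitrarily into two parts, since refining a coloring can only decrease the monochromatic-degree at every vertex. So the whole task reduces to computing a defective $2$-coloring with defect at most $\lfloor\Delta/2\rfloor + \varepsilon\Delta$ within the prescribed round budget.

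For this $2$-coloring step I would invoke the Barenboim--Elkin defective-$2$-coloring primitive: given a proper $O(\Delta^2)$-vertex coloring, it produces a map $\chi : V \to \{0,1\}$ whose monochromatic-degree at every vertex is at most $\lfloor\Delta/2\rfloor + \varepsilon\Delta$, in $O(1/\varepsilon^2)$ rounds of \CONGEST. At a high level the procedure in~\cite{barenboim14distributed} is a batched greedy scheme: the $O(\Delta^2)$ input color classes are grouped into suitably sized blocks, and within each block every vertex picks the binary color whose current monochromatic-degree among already-colored neighbors is smaller, breaking ties by input color. A counting argument then decomposes the final defect of each vertex into a ``balanced'' contribution of at most $\lceil\Delta/2\rceil$ from the greedy choices plus an additive slack proportional to the block size; tuning the block size in terms of $\varepsilon$ gives the claimed bound. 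Only an $O(\log n)$-bit color/counter per vertex per round is exchanged, so the procedure runs in \CONGEST without modification.

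Once $\chi$ is available, the $4$-coloring $\chi' : V \to \{0,1,2,3\}$ is obtained in zero additional communication rounds by splitting each of $\chi^{-1}(0)$ and $\chi^{-1}(1)$ arbitrarily into two parts --- for instance, by using the parity of the input $O(\Delta^2)$-color as a tie-breaker. This relabeling is purely local and, by the observation above, cannot increase any vertex's monochromatic-degree. Composing the two steps therefore yields a $(\varepsilon\Delta + \lfloor\Delta/2\rfloor)$-defective $4$-coloring in $O(1/\varepsilon^2)$ rounds of \CONGEST.

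The step that I expect to require the most care is confirming the $O(1/\varepsilon^2)$ round complexity of the Barenboim--Elkin primitive in this particular parameter regime, and checking that the resulting defect bound takes the claimed \emph{additive} form $\lfloor\Delta/2\rfloor + \varepsilon\Delta$ rather than a multiplicative one such as $(1+\varepsilon)\lceil\Delta/2\rceil$. Concretely, this means instantiating the appropriate theorem from~\cite{barenboim14distributed} with $p=2$ and slack parameter $\varepsilon\Delta$, and tracing through its analysis to verify that the constants match and that $\Delta$ does not reappear in the final round count.
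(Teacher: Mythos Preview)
Your observation that refining a coloring cannot increase any vertex's monochromatic degree is correct, so the split-into-$4$ step is sound. The gap is in the step before it: the ``Barenboim--Elkin defective-$2$-coloring primitive'' you invoke does not exist in~\cite{barenboim14distributed} with the parameters you state. The defective-coloring machinery there produces $p^2$ colors with (additional) defect $\lfloor\Delta/p\rfloor$; the square is inherent to the technique, so the smallest palette for which a $\lfloor\Delta/2\rfloor$-type defect bound is available is $4$ colors (taking $p=2$), not $2$. Your own closing remark --- ``instantiating the appropriate theorem with $p=2$'' --- would in fact hand you a $4$-coloring directly and make the splitting step moot. The informal ``batched greedy'' sketch does not patch this either: when a vertex picks the minority color among already-processed neighbors, its conflicts with \emph{later}-processed neighbors are uncontrolled, and with only two colors there is no slack to absorb them. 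That is precisely the place where the $p^2$-color argument uses its extra colors.

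The paper's proof therefore skips the $2$-coloring detour and goes to $4$ colors in two steps. First, from the given $O(\Delta^2)$-coloring it computes, in $O(1)$ rounds, an $\varepsilon\Delta$-defective $O((\Delta/(\varepsilon\Delta))^2)=O(1/\varepsilon^2)$-coloring via the standard Barenboim--Elkin construction. Second, it applies their procedure Refine with parameter $p=2$ to this coloring: iterating through the $O(1/\varepsilon^2)$ input classes yields a $p^2=4$-coloring whose defect is the old defect $\varepsilon\Delta$ plus $\lfloor\Delta/p\rfloor=\lfloor\Delta/2\rfloor$, in $O(1/\varepsilon^2)$ rounds. The two-step structure --- shrink to $O(1/\varepsilon^2)$ colors with small defect, then Refine to $4$ --- is what makes both the additive defect $\lfloor\Delta/2\rfloor+\varepsilon\Delta$ and the $O(1/\varepsilon^2)$ running time come out, and it is the step your outline is missing.
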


We are now ready to present our \CONGEST algorithm for $O(\Delta)$-edge coloring on general graphs. \Cref{thm:congest} follows directly from the following theorem.
\begin{theorem}\label{thm:maincongestdetailed}
  The $(8+\varepsilon)\Delta$-edge coloring problem can be solved in
  $O\big(\frac{\log^{12} \Delta}{\varepsilon^6} + \log^* n\big)$ deterministic rounds in the
  \CONGEST model, for any small enough constant $\varepsilon>0$.
\end{theorem}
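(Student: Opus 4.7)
The plan is to bootstrap \Cref{thm:bipartitecongest} via a recursive decomposition based on defective vertex $4$-colorings. As a one-time preprocessing step, compute an $O(\Delta^2)$-vertex coloring by Linial's algorithm in $O(\log^* n)$ rounds; this coloring will seed \Cref{lem:d4c} at every recursive level. Fix a small constant $\varepsilon' = \Theta(\varepsilon)$ with $\varepsilon' < 1/8$.

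At a generic recursion level operating on a subgraph $H$ of current maximum degree at most $\Delta^*$ with allotted palette $P$, I would first apply \Cref{lem:d4c} with parameter $\varepsilon'$ to obtain a $(\varepsilon'\Delta^* + \lfloor \Delta^*/2 \rfloor)$-defective $4$-vertex coloring with classes $V_1, V_2, V_3, V_4$ in $O(1/\varepsilon'^2)$ rounds. The cross-class edges of $H$ split into two bipartite $2$-colored subgraphs: let $B_1$ consist of all $H$-edges between $V_1 \cup V_2$ and $V_3 \cup V_4$ (covering the class pairs $\{1,3\}, \{1,4\}, \{2,3\}, \{2,4\}$), and let $B_2$ consist of the remaining cross-class edges, which all lie between $V_1 \cup V_3$ and $V_2 \cup V_4$ (covering $\{1,2\}$ and $\{3,4\}$). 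Both $B_1$ and $B_2$ inherit a known $2$-coloring from the class labels and have maximum degree at most $\Delta^*$, so I would invoke \Cref{thm:bipartitecongest} on each in parallel with parameter $\varepsilon'$, using $(2+\varepsilon')\Delta^*$ colors drawn from disjoint sub-palettes of $P$ in $O(\log^{11}\Delta / \varepsilon'^6)$ rounds. The four intra-class subgraphs $H[V_i]$ are vertex-disjoint and each has maximum degree at most $(\tfrac{1}{2}+\varepsilon')\Delta^*$, so I would recurse on them in parallel with fresh disjoint sub-palettes; the base case, once $\Delta^*$ falls below a constant, is handled by any trivial greedy edge coloring in $O(1)$ rounds.

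The color recurrence $T(\Delta^*) \le (4+2\varepsilon')\Delta^* + T\!\big((\tfrac{1}{2}+\varepsilon')\Delta^*\big)$ telescopes to $T(\Delta) \le (4+2\varepsilon')\Delta / (\tfrac{1}{2}-\varepsilon') \le (8+\varepsilon)\Delta$ for the constant in $\varepsilon' = \Theta(\varepsilon)$ chosen small enough (e.g., $\varepsilon' \le \varepsilon/25$ suffices). The recursion has depth $L = O(\log \Delta)$ since the maximum degree shrinks by a constant factor bounded away from $1$ at each level, and each level costs $O(\log^{11}\Delta / \varepsilon'^6)$ rounds (the bipartite step dominating the defective coloring); multiplying by $L$ and adding the one-time $O(\log^* n)$ for Linial yields the target $O(\log^{12}\Delta / \varepsilon^6 + \log^* n)$.

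The main obstacle I anticipate is supplying \Cref{lem:d4c} with a legitimate $O((\Delta^*)^2)$-vertex coloring at every recursion level: the Linial preprocessing provides an $O(\Delta^2)$-vertex coloring, which overshoots once $\Delta^* \ll \Delta$. I would resolve this by performing a short Linial-style palette shrinkage at the start of each recursive call, costing only $O(\log^*)$ extra rounds per level and comfortably absorbed in the per-level budget; alternatively, the defective coloring algorithm underlying \Cref{lem:d4c} typically tolerates an oversized input palette at negligible extra cost. The remaining bookkeeping to verify is that the parallel recursion respects CONGEST bandwidth, which follows from the vertex-disjointness of the four intra-class sub-instances and the palette-disjointness of all parallel coloring steps.
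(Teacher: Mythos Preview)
Your approach is essentially the paper's: precompute an $O(\Delta^2)$-vertex coloring, then at each level use \Cref{lem:d4c} to split off two bipartite $2$-colored layers handled by \Cref{thm:bipartitecongest} and recurse on the monochromatic remainder, with the geometric series giving $(8+\varepsilon)\Delta$ colors and the $O(\log\Delta)$ levels giving the extra $\log\Delta$ factor in time. The paper differs only in cosmetic parameter choices (it takes the defective-coloring slack $\varepsilon_1=\Theta(1/\log\Delta)$ rather than a constant, which makes the base case literally constant-degree) and does not explicitly discuss the palette-shrinking issue you flag, which is indeed harmless.

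One wording slip to fix: you say the four intra-class subgraphs $H[V_i]$ receive ``fresh disjoint sub-palettes,'' but since they are pairwise vertex-disjoint they must \emph{share} a single palette---your own recurrence $T(\Delta^*)\le (4+2\varepsilon')\Delta^*+T\big((\tfrac12+\varepsilon')\Delta^*\big)$ already assumes this, and with four genuinely disjoint palettes the color count would diverge.
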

\begin{proof}[Proof Sketch]
  We start by computing an initial $O(\Delta^2)$-vertex coloring, which can be done in $O(\log^* n)$ rounds.
	Let $\varepsilon_1$ be a parameter that we choose appropriately
	We apply \Cref{lem:d4c} with parameter $\varepsilon_1$ for $4$-coloring the nodes of the graph with colors in $\{1,2,3,4\}$. Then, let $G_1$ be the graph induced by edges $\{u,v\}$ satisfying that the color of $u$ is either $1$ or $2$, and the color of $v$ is either $3$ or $4$. This graph is clearly bipartite, and nodes know their side of the bipartition. Hence, we can apply \Cref{thm:bipartitecongest} to color the edges of this graph by using at most $(2+\varepsilon_2)\Delta$ colors, for some appropriate parameter $\varepsilon_2$. We then do the same in the bipartite graph induced by the edges that go from colors $\set{1,3}$ to colors $\set{2,4}$. For the edges that are colored so far, we have now used $(4+2\eps_2)\Delta$ colors. All the remaining uncolored edges are now between nodes of the same color and hence the degree of them is at most $(1/2+\eps_1)\Delta$ and thus close to $\Delta/2$. To color the rest of the graph, we now recurse. As the maximum degree (almost) halves in each step, the total number of colors will be (essentially) twice the number of colors we have used so far and thus $(8+O(\eps_1))\Delta$. A full proof appears in \Cref{app:deferredcongest}.
\end{proof}

\section{$\boldsymbol{(2\Delta-1)}$-Edge Coloring in the LOCAL Model}
\label{sec:eclocal}

In this section, we give an overview of our algorithm that computes a $(\mathit{degree}+1)$-list edge coloring and thus as a special case a $(2\Delta-1)$-edge coloring. For the detailed formal arguments, we refer to \Cref{app:eclocal}.

\paragraph{Coloring $2$-Colored Bipartite Graphs.} We first again consider the case of computing a somewhat relaxed coloring of a $2$-colored bipartite graph $G=(U\cup V, E)$. For the high level description here, assume that $G$ has maximum degree $\Delta$ and that every edge $e\in E$ has a list $L_e$ of size $|L_e|\geq 2\deg_G(e)$. Assume further that $L_e\subseteq\set{1,\dots,C}$, i.e., all colors come from a global space of $C$ colors. If all nodes have access to the same colors (as in the algorithm of \Cref{sec:econgest}), we can use defective $2$-colorings to split the graph into two parts such that the maximum edge degree in each part is approximately halved compared to the original maximum degree. However, in the case of list coloring, we cannot do this because in a local distributed way, we cannot split the colors into two parts such that each node can keep half of its colors. We instead have to adapt a method that was introduced in \cite{Kuhn20} and also used in \cite{BalliuKO20}. First, the global space of colors $\set{1,\dots,C}$ is split into two approximately equal parts. For this, let us call the colors $\set{1,\dots,\lfloor C/2\rfloor}$ red and the remaining colors blue. We want to color the edges red and blue such that afterwards, the red edges $e$ only keep the red colors in their list $L_e$ and the blue edges only keep their blue colors. In this way, the two parts are independent of each other and can be colored in parallel. Note however that because the lists $L_e$ are arbitrary subsets of $\set{1,\dots,C}$, the list $L_e$ of an edge $e$ can consist of an arbitrary division into red and blue colors. For an edge $e\in E$, let $\lambda_e$ be the fraction of red colors in its list, i.e., $|L_e\cap\set{1,\dots,\lfloor C/2\rfloor}|=\lambda_e|L_e|$. If $e$ chooses to be red, its list shrinks by a factor $\lambda_e$ and we therefore also want to shrink $e$'s degree by at least (approximately) a factor $\lambda_e$ and if $e$ chooses to be blue, the $e$'s degree has to shrink by at least (approximately) a factor $1-\lambda_e$. If the degree of $e$ is sufficiently large, we can achieve this by computing a generalized defective $2$-edge coloring as defined in \Cref{def:defective2coloring} and we can use \Cref{cor:generalized-defective-2-edge-coloring} to compute such a $2$-coloring efficiently. The goal therefore is to recursively split the global color space into two parts and always use \Cref{cor:generalized-defective-2-edge-coloring} to split the edges such that the degree-to-list size ratio grows by at most a factor $1+o(1)$. This essentially works as in \Cref{sec:econgest}. There is only one additional small issue that we have to take care of. Since the ratio $\lambda_e$ can be an arbitrary value between $0$ and $1$, we have no control over the minimum and maximum edge degree or list size in the graph. \Cref{cor:generalized-defective-2-edge-coloring} however only gives good guarantees for edges that have a sufficiently large degree. As soon as an edge $e$ has degree at most some $\poly\log\Delta$, we therefore do not further split the color space of $e$ recursively. Edge $e$ then waits until all neighboring edges that are split further have been colored and afterwards, $e$ only has a small uncolored degree and can therefore be colored greedily by a standard edge coloring algorithm.

\paragraph{Degree+1 List Edge Coloring on General Graphs.} 
Given a $(\mathit{degree}+1)$-list coloring instance of an arbitrary graph $G$, we start by computing a $\poly(\Delta)$-vertex coloring, which can be done in $O(\log^* n)$ rounds. 

Then, we compute a defective $O(1)$-vertex coloring of $G$, where each node has defect at most $\Delta/c$ for a sufficiently large constant $c$. By using an algorithm from \cite{barenboim14distributed}, this can be done in $O(\log^* \Delta)$ time by exploiting the precomputed $\poly(\Delta)$-vertex coloring.

We then sequentially iterate through all possible pairs of colors $(a,b)$, and we consider the graph induced by edges with one endpoint of color $a$ and the other endpoint of color $b$. This graph is clearly bipartite, but we cannot directly apply the previously described algorithm, because the lists of the edges may be too small, compared to their degree. In order to apply the algorithm anyways, we use a method that has already been used in \cite{fraigniaud16local,Kuhn20,BalliuKO20} and that allows us to use the previously described algorithm to \emph{partially} color the graph. 

Like this, we can reduce the uncolored degree of each edge by a constant factor even in a $(\mathit{degree}+1)$-list coloring instance, and repeating $O(\log\Delta)$ times allows to color all the edges and to fully solve the given  $(\mathit{degree}+1)$-list coloring problem.


\bibliographystyle{ACM-Reference-Format}
\bibliography{references}

\appendix

\section{Deferred Proofs of Section \ref{sec:tokendropping}}
\label{app:deferredtoken}

\begin{proof}[\bf Proof of \Cref{thm:tokenbound}]
    The algorithm is run for $\lfloor k/\delta\rfloor -1$ phases, and
  since each phase can clearly be implemented in $O(1)$ rounds, the
  time complexity of the algorithm is therefore $O(k/\delta)$. The
  algorithm terminates after $T=\lfloor k/\delta\rfloor -1$
  rounds. The number of tokens $\tau(v)$ of a node $v$ at the end of
  the algorithm is therefore $\tau(v) = x_v(T)+ y_v(T)$. By
  \Cref{lemma:tokenbounds}, we therefore have $\tau(v)\leq k$.  We can
  further upper bound $\tau(u)-\tau(v)$ by
  \begin{equation}
    \label{eq:tokendifference}
    \tau(u)-\tau(v) \leq x_u(T) + y_u(T) - y_v(T).
  \end{equation}
  By \Cref{lemma:tokenbounds}, we have
  \begin{eqnarray*}
    x_u(T) & \leq & \max\set{2\alpha_u,k-T\delta}\\
           & =& \max\set{2\alpha_u, k-\left(\left\lfloor \frac{k}{\delta}\right\rfloor-1\right) \cdot \delta}\\
           & \leq &
                    \max\set{2\alpha_u,k-\left(\frac{k}{\delta} - 2\right)\cdot \delta}\\
           & = & \max\set{2\alpha_u,2\delta}\ \stackrel{(\alpha_u\geq\delta)}{\leq} 2\alpha_u.
  \end{eqnarray*}
  The bound on $\tau(u)-\tau(v)$ now therefore directly follows together with \cref{eq:tokendifference} and \Cref{lemma:passiveslackbound}.
\end{proof}

\section{Deferred Proofs of Section \ref{sec:gen-def-2-ec}}
\label{app:deferreddefective}

\begin{proof}[\bf Proof of \Cref{lemma:orientation2defective}]
  Let $d_e$ be the defect of edge $e$, i.e., $d_e$ denotes the total
  number of neighboring edges of $e$ that have the same color as $e$.

  Consider some edge $e=(u,v)$ for $u\in U$ and $v\in V$. We first
  assume that $e$ is oriented from $u$ to $v$ and thus $e$ is red. The
  total number of neighboring red edges of $e$ is then equal to the
  number of edges $e'\neq e$ that are oriented out of $u$ plus the
  number of edges $e''\neq e$ that are oriented into $v$, i.e.,
  \begin{eqnarray*}
    d_e
    & = & \underbrace{\deg_G(u) - x_u}_{\text{\# edges oriented out of $u$}} - 1 + \underbrace{x_v}_{\text{\# edges oriented into $v$}} - 1\\
    & \leq & \deg_G(u) + \eta_e + 1 + \frac{\eps}{2}\cdot\deg_G(e) +\beta - 2\\
    & \stackrel{\eqref{eq:def_eta}}{=} & -2\lambda_e +\lambda_e\deg_G(u)+\lambda_e\deg_G(v)   +
                                         \eps\lambda_e\cdot\deg_G(e) + (2\lambda_e-1)\beta + \beta\\
    & = & (1+\eps)\cdot\lambda_e\cdot\deg_G(e) + 2\lambda_e\beta.
  \end{eqnarray*}
  The first inequality follows because if $e$ is red, we have $x_v-x_u\leq \eta_e + 1 + \frac{\eps}{2}\cdot\deg_G(e)+\beta$.
  If $e=(u,v)$ is oriented from $v$ to $u$ and thus $e$ is blue, we similarly obtain
  \begin{eqnarray*}
    d_e
    & = & \underbrace{\deg_G(v) - x_v}_{\text{\# edges oriented out of $v$}} - 1 + \underbrace{x_u}_{\text{\# edges oriented into $u$}} - 1\\
    & \leq & \deg_G(v) - \eta_e + 1 + \frac{\eps}{2}\cdot\deg_G(e) +\beta - 2\\
    & \stackrel{\eqref{eq:def_eta}}{=} & 2\lambda_e +(1-\lambda_e)\deg_G(u)+(1-\lambda_e)\deg_G(v)  -
                                         \eps\lambda_e\cdot\deg_G(e) +\eps\cdot\deg_G(e)+(2-2\lambda_e)\beta-2\\
    & = & (1+\eps)\cdot(1-\lambda_e)\cdot\deg_G(e)+2(1-\lambda_e)\beta.
  \end{eqnarray*}
  The first inequality follows because if $e$ is blue, we have $x_u-x_v\leq -\eta_e+1+\frac{\eps}{2}\cdot\deg_G(e)+\beta$.
\end{proof}

\begin{proof}[\bf Proof of \Cref{lemma:degreereduction}]
  We prove the statement by induction on $\phi$. 
  First of all, note that for every edge $e$ and every $\phi\geq 1$, we have $d(e,\phi)\leq d(e,\phi-1)$. For every edge $e$ with $d(e,\phi-1)\leq (1-\nu)^{\phi}\bar{\Delta}$, we therefore clearly also have $d(e,\phi)\leq (1-\nu)^{\phi}\bar{\Delta}$. It therefore suffices to consider edges $e$ for which $d(e,\phi-1)>(1-\nu)^{\phi}$ and edges $e$ that are not already oriented at the beginning of phase $\phi$. Note that this is exactly the set $E_{\phi}$ as defined in step 1 of the above algorithm. Every edge $e=\{u,v\}\in E_{\phi}$ proposes to one of its two nodes. W.l.o.g., assume that $e$ proposes to $v$. Either $v$ accepts $e$'s proposal and in this case, $e$ is oriented after phase $\phi$ (and thus, we do not need to show that $d(e,\phi)$ is bounded). Otherwise, $v$ accepts $k_{\phi}=\big\lceil\nu(1-\nu)^{\phi-1}\bar{\Delta}\big\rceil$ other proposals. Node $v$ is therefore incident to $k_{\phi}$ edges that are unoriented at the beginning of phase $\phi$ and that become oriented in phase $\phi$. The number of unoriented incident edges of $v$ and thus the number of neighboring unoriented edges of $e$ therefore decreases by at least $k_{\phi}$ in phase $\phi$ and thus,
  \[
    d(e,\phi) \leq d(e,\phi-1) - k_{\phi} \leq (1-\nu)^{\phi-1}\bar{\Delta} - k_{\phi}
    \leq (1-\nu)^{\phi-1}\bar{\Delta}-\nu(1-\nu)^{\phi-1}\bar{\Delta} = (1-\nu)^{\phi}\bar{\Delta}.
  \]
  The first inequality follows from the induction hypothesis or for $\phi=1$ from $d(e,0)\leq \bar{\Delta}$.
\end{proof}

\begin{proof}[\bf Proof of \Cref{thm:orientation}]
  By \Cref{lemma:degreereduction}, after $\phi$ phases, the maximum edge degree of the subgraph induced by the unoriented edges is at most $(1-\nu)^{\phi}\bar{\Delta}$. This in particular implies that after $\phi$ phases, every node $v\in V$ has at most $1+(1-\nu)^{\phi}\bar{\Delta}$ incident unoriented edges. After $\hat{\phi}=\ln(\bar{\Delta})/\ln(1/(1-\nu)) - O(1)=O(\log(\Delta)/\nu)$ phases, each node therefore has at most $O(1)$ incident unoriented edges. At this point, we can stop the above algorithm and just orient the remaining unoriented edges arbitrarily. This only affects the number of incoming edges at each node by an additive constant. Let us consider the state after $\hat{\phi}$ phases. Consider some edge $e=(u,v)$ and assume first that $e$ is oriented from $u$ to $v$ after phase $\hat{\phi}$. By \Cref{lemma:orientationgrowth}, we then have
  \begin{eqnarray*}
    x_v(\hat{\phi}) - x_u(\hat{\phi})
    & \leq & \eta_e + k_e + \hat{\phi}\cdot\xi_e\\
    & \leq & \eta_e + k_e + \frac{\ln\bar{\Delta}}{\ln\big(\frac{1}{1-\nu}\big)}\cdot
             \left(\frac{5}{2}\cdot\frac{\nu}{\ln\bar{\Delta}}\cdot k_e + 28\cdot\frac{\ln^2\bar{\Delta}}{\nu^4}\right)\\
    & \leq & \eta_e + \frac{7}{2}\cdot\left(\frac{\nu}{1-\nu}\cdot\deg_G(e) + 1\right) +
             28\cdot \frac{\ln^3\bar{\Delta}}{\nu^5}\\
    & \leq & \eta_e + 4\nu\cdot\deg_G(e) + \frac{7}{2} + 28\cdot \frac{\ln^3\bar{\Delta}}{\nu^5}.
  \end{eqnarray*}
  The second inequality follows from \eqref{eq:k_xi_edge} and the
  third inequality follows from \eqref{eq:k_xi_edge} and the fact that
  for $\nu\in(0,1]$, $\nu\leq \ln\big(\frac{1}{1-\nu}\big)$. The last
  inequality follows because $\nu\leq 1/8$ (cf.\ \Cref{eq:def_nu}). By setting $\eps = 8\nu$,
  we therefore get
  $x_v(\hat{\phi}) - x_u(\hat{\phi})\leq \eta_e + \frac{\eps}{2}\cdot\deg_G(e) +
  C\cdot\frac{\ln^3\bar{\Delta}}{\eps^5}$ for some constant $C>0$. In
  the same way, if $e$ is oriented from $v$ to $u$, we obtain that
  $x_u(\hat{\phi}) - x_v(\hat{\phi})\leq -\eta_e + \frac{\eps}{2}\cdot\deg_G(e)
  + C\cdot\frac{\ln^3\bar{\Delta}}{\eps^5}$ for some constant $C>0$.

  It remains to bound the round complexity of the algorithm. By \Cref{thm:tokenbound}, the time required for phase $\phi$ is
  \[
    O\left(\frac{k_\phi}{\delta_\phi}\right) =
    O\left(\frac{\nu(1-\nu)^{\phi-1}\bar{\Delta}}{\frac{\nu^6}{\ln^3\bar{\Delta}}\cdot (1-\nu)^{\phi-1}\bar{\Delta}}\right) =
    O\left(\frac{\log^3\Delta}{\eps^5}\right).
  \]
  The claimed time complexity now follows because the number of phases
  is
  $O\big(\frac{\log\bar\Delta}{\nu}\big)=O\big(\frac{\log\Delta}{\eps}\big)$.
\end{proof}

\section{Deferred Proofs of Section \ref{sec:econgest}}
\label{app:deferredcongest}

\begin{proof}[\bf Proof of \Cref{thm:bipartitecongest}]
  We describe a procedure that, given a graph $G$ of maximum degree
  $\Delta$ and maximum edge degree $\bar{\Delta}$, is able to split it
  into two disjoint subgraphs, each of maximum edge degree
  $\bar{\Delta} \cdot (1+\chi)/2 + \beta$, for some parameter $\chi$
  satisfying $c \cdot \varepsilon / \log \Delta \le \chi\le 1/2$ (for
  some constant $c$) that we will fix later, and
  $\beta = O\big(\frac{\log^3\Delta}{\chi^5}\big)$. Note that the assumption
  on $\chi$ implies that
  $\beta = O\big(\frac{\log^8 \Delta}{c^5 \varepsilon^5}\big)$. We will then
  apply this procedure recursively in both subgraphs for many times,
  in order to obtain many subgraphs of small maximum edge degree. We
  will then color the edges of each subgraph independently in
  parallel, and the final color of an edge $e$ in $G$ will be given by
  its color in the subgraph $G'$ containing $e$, combined with the
  index of $G'$.
	
  The procedure works as follows. For each edge $e$, we fix
  $\lambda_e = 1/2$ and we apply
  \Cref{cor:generalized-defective-2-edge-coloring} with parameter
  $\chi$. We obtain a defective $2$-edge coloring of $G$, satisfying
  that, for each edge $e$,
  $\deg_{G'}(e) \le \deg_{G}(e) \cdot (1+\chi)/2 + \beta$, where $G'$
  is the graph induced by edges with the same color of $e$. Hence, we
  decomposed our graph $G$ into two edge-disjoint subgraphs $G_1$ and
  $G_2$, both having maximum edge degree at most
  $\bar{\Delta} \cdot (1+\chi)/2 + \beta$.
	
  By applying this procedure recursively $k$ times, we get $2^k$
  edge-disjoint subgraphs, each of maximum edge degree
  \begin{equation*}
    d \le \bar{\Delta} \cdot \left(\frac{1+\chi}{2}\right)^k + \beta\sum_{i=0}^{k-1}\left(\frac{1+\chi}{2}\right)^i
    \le \bar{\Delta} \cdot \left(\frac{1+\chi}{2}\right)^k + \frac{2\beta}{1-\chi}
    \le \bar{\Delta} \cdot \left(\frac{1+\chi}{2}\right)^k + 4\beta. 
  \end{equation*}
	
  Each subgraph can then be edge-colored with $d+1$ colors. We can
  characterize each edge by a tuple $(\mathsf{vec},\mathsf{col})$,
  where $\mathsf{vec}$ is a vector in $\{0,1\}^k$ identifying, for
  each step of the recursion, which color the edge obtained, and
  $\mathsf{col}$ is the color in $\{0,\ldots, d\}$ obtained in the
  last step. Note that if two edges are neighbors in $G$, they must
  have different tuples, because either they ended up in different
  subgraphs, or they obtained different colors in the last coloring
  step. Hence, each tuple can be seen as a color from a palette of
  size
  \[ p = (1+d) \cdot 2^k \le \bar{\Delta}(1+\chi)^k + 2^k(1 + 4\beta)
    \le \bar{\Delta}(1+\chi)^k + 2^k \cdot 5\beta.
  \] 
  By fixing $k = \lfloor \ln(1+\varepsilon/4) / \chi \rfloor$, we obtain 
  \[
    \bar{\Delta}(1+\chi)^k \le \bar{\Delta}(1+\chi)^{\frac{\ln (1+\varepsilon/4)}{\chi}} \le \bar{\Delta} e^{\ln(1+\varepsilon/4)} = \bar{\Delta}(1+\varepsilon/4),
  \]
  and
  \[
    2^k \cdot 5\beta \le (1 + \varepsilon/4)^{\frac{\ln 2}{\chi}} \cdot 5 \beta \le  (1 + \varepsilon/4)^{\frac{\ln 2}{\chi}}  \cdot c' \frac{\log^{8} \Delta}{c^5 \varepsilon^5}
  \]
  for some constant $c'$.
  We now fix $\chi$ to be
  \[
    \chi = \frac{\log(1 + \varepsilon/4)\ln 2}{\log\big(\frac{\varepsilon\bar{\Delta}/4}{c'\log^{8}(\Delta)/(c^5\varepsilon^5)}\big)}.
  \]
  Note that, by taking a small enough constant $c$, we still satisfy
  $\chi \ge c \cdot \varepsilon / \log \Delta$ as required. Also, notice that $1/\chi = O(\frac{1}{\varepsilon}\log \Delta)$.
  We obtain
  \[
    2^k \cdot 5\beta \le \frac{\varepsilon\bar{\Delta}/4}{c'\log^{8}(\Delta)/(c^5\varepsilon^5)} \cdot c'\log^{8}(\Delta)/(c^5\varepsilon^5) = \varepsilon \bar{\Delta} / 4.
  \]
  Hence, $p \le \bar{\Delta}(1+\varepsilon/2)$. Since $\bar{\Delta} \le 2\Delta-2$, we obtain a coloring that uses at most $2 \Delta(1+\varepsilon/2) = (2+\varepsilon)\Delta$ colors, as required.
  
  The total running time is bounded by the number of recursion steps $k$, multiplied by the time required to solve the generalized defective $2$-edge coloring problem with parameter $\chi$, plus the time required to compute the $(d+1)$-coloring in the last step.
  Note that \[
    2^k \ge \frac{1}{2} \cdot (1 + \varepsilon/4)^{\frac{\ln 2}{\chi}} = \frac{\varepsilon\bar{\Delta}/4}{2 c'\log^{8}(\Delta)/(c^5\varepsilon^5)},
  \]
  and hence $d$ is upper bounded by
  \[
    d \le \bar{\Delta}(1+\chi)^k / 2^k + 4 \beta \le \frac{\bar{\Delta}(1+\varepsilon/4)}{\varepsilon\bar{\Delta}/4} 2c'\log^{8}(\Delta)/(c^5\varepsilon^5) + 4\beta = O\left(\frac{\log^{8}\Delta}{\varepsilon^6} + \beta\right).
  \]
  In the \CONGEST model, we can compute a $(d+1)$-edge coloring in $O(d)$ rounds \cite{BEG18} (note that a $\log^* n$ dependency is not necessary in our case, since, in \cite{BEG18}, it is only spent to compute an initial $O(d^2)$-edge coloring, which can be done in $O(1)$ rounds if we are given a $2$-vertex coloring). Hence, the last phase costs $O\big(\frac{\log^{8}\Delta}{\varepsilon^6} + \beta\big)$ rounds.
  
  Summarizing, each coloring phase costs
  $O\big(\frac{\log^4 \Delta}{\chi^6}\big)$ rounds, and we perform
  $k = O\big(\frac{\varepsilon}{\chi}\big)$ phases. Since
  $1/\chi = O\big(\frac{1}{\varepsilon}\log \Delta\big)$, then this part costs
  $O\big(\frac{\log^4 \Delta }{\chi^6} \cdot \frac{\varepsilon}{\chi}\big) =
  O\big(\frac{\log^4 \Delta }{\chi^7} \cdot \varepsilon\big) =
  O\big(\frac{\log^{11} \Delta }{\varepsilon^6}\big)$ rounds.  The coloring of
  the last phase costs
  $O\big(\frac{\log^{8} \Delta}{\varepsilon^6} + \beta\big) =
  O\big(\frac{\log^{8}\Delta}{\varepsilon^6} + \frac{\log^3
    \Delta}{\chi^5}\big) = O\big(\frac{\log^{8}\Delta}{\varepsilon^6} +
  \frac{\log^8 \Delta}{\varepsilon^5}\big) =
  O\big(\frac{\log^{8}\Delta}{\varepsilon^6}\big)$ rounds.  Hence, in total,
  we spend
  $O\big(\frac{\log^{11} \Delta}{\varepsilon^6}
  +\frac{\log^{8}\Delta}{\varepsilon^6}\big) = O\big(\frac{\log^{11}
    \Delta}{\varepsilon^6}\big)$ rounds, proving the theorem.
\end{proof}

\begin{proof}[\bf Proof of \Cref{thm:maincongestdetailed}]
	We start by computing an initial $O(\Delta^2)$-vertex coloring, which can be done in $O(\log^* n)$ rounds.
	Let $\varepsilon_1$ be a parameter to be fixed later.
	We apply \Cref{lem:d4c} with parameter $\varepsilon_1$ for $4$-coloring the nodes of the graph with colors in $\{1,2,3,4\}$. Then, let $G_1$ be the graph induced by edges $\{u,v\}$ satisfying that the color of $u$ is either $1$ or $2$, and the color of $v$ is either $3$ or $4$. This graph is clearly bipartite, and nodes know their side of the bipartition. Hence, we can apply \Cref{thm:bipartitecongest} to color the edges of this graph by using at most $(2+\varepsilon_2)\Delta$ colors, for some parameter $\varepsilon_2$ to be fixed later.
	
	We then consider the graph $G_2$ induced by uncolored edges $\{u,v\}$ satisfying that the color of $u$ is either $1$ or $3$, and the color of $v$ is either $2$ or $4$. Again, we can apply  \Cref{thm:bipartitecongest} to color the edges of this graph by using at most $(2+\varepsilon_2)\Delta$ colors. We obtain a partial coloring of the edges of $G$ with $(4+2\varepsilon_2)\Delta$ colors, where all bichromatic edges are colored.
	
	We apply the same procedure recursively on the graph induced by uncolored edges, for other $k$ times ($k+1$ times in total, indexed by $i \in \set{0,\ldots,k})$. At the end, we color the remaining uncolored edges. Note that, by \Cref{lem:d4c}, the degree of the graph induced by uncolored edges decreases at least by a factor $(1/2 + \varepsilon_1)$ at each step, and hence the maximum degree of the graph before step $i$ is bounded by $d_i = \Delta (1/2+\varepsilon_1)^{i}$.
	The maximum degree induced by uncolored edges after step $k$ is $d_{k+1}$, and hence it can be edge colored with $2d_{k+1}-1$ colors.
	Hence, the total number of colors used by the algorithm is
	\begin{align*}
	  c &= 2d_{k+1} -1 + \sum_{i=0}^{k} (4+2\varepsilon_2) d_i \\
	    &< 2\Delta (1/2+\varepsilon_1)^{k+1} + \Delta (4+2\varepsilon_2) \sum_{i=0}^{k}  (1/2 + \varepsilon_1)^i\\ 
		&=  \Delta \left( 2(1/2+\varepsilon_1)^{k+1}  + \frac{4+2\varepsilon_2}{1/2-\varepsilon_1}(1 - (1/2+\varepsilon_1)^{k+1}) \right) \\
		&=  \Delta \left( \frac{4+2\varepsilon_2}{1/2-\varepsilon_1} - (\frac{4+2\varepsilon_2}{1/2-\varepsilon_1}-2)(1/2+\varepsilon_1)^{k+1} \right) \\
		&\le  \Delta \left( \frac{4+2\varepsilon_2}{1/2-\varepsilon_1} - \frac{6}{2^{k+1}} \right) \\
		&=\Delta(8 + O(\varepsilon_1 + \varepsilon_2)).
	\end{align*}
	If we fix $k = \lfloor\log \Delta \rfloor - 1$ and $\varepsilon_1 = 1 / (2k)$, then the final graph has maximum degree
	\[
	d_{k+1} = \Delta (1/2+\varepsilon_1)^{k+1} \le (1 + 1/\log \Delta)^{\log \Delta} = O(1),
	\] 
	and thus can be edge colored with the required amount of colors,  $2d_{k+1}-1$, in $O(\log^* n)$ rounds. 
	
	The total running time is hence $O(T_{\mathrm{initial}} + k \cdot (T_{\mathrm{defective}} + T_{\mathrm{bipartite}}) + T_{\mathrm{final}})$, where $T_{\mathrm{initial}} = O(\log^* n)$ is the time required to compute an initial $O(\Delta^2)$-vertex coloring, $k+1$ is the number of recursive steps, $T_{\mathrm{defective}} = O(1 / \varepsilon_1^2)$ is the time required to apply \Cref{lem:d4c}, $T_{\mathrm{bipartite}} = O\big(\frac{\log^{11} \Delta}{\varepsilon_2^6}\big)$ is the time required to apply \Cref{thm:bipartitecongest}, and $T_{\mathrm{final}} = O(\log^* n)$ is the time required to color the resulting constant degree graph in the last step.
	
	By fixing $\varepsilon_2 = \varepsilon$, we hence obtain a $\Delta(8 + O(\varepsilon + 1 / \log \Delta)) = \Delta(8 + O(\varepsilon))$ edge coloring in $O\big(\log^* n + \log \Delta \cdot  \big(\log^2 \Delta + \frac{\log^{11} \Delta}{\varepsilon^6}\big) + \log^* n\big) = O\big(\frac{\log^{12} \Delta}{\varepsilon^6} + \log^* n\big)$ rounds.
      \end{proof}

\begin{proof}[Proof of \Cref{lem:d4c}]
	We start by computing a $p$-defective $O((\Delta/p)^2)$-vertex coloring, for $p=\varepsilon \Delta$, which can be done in $O(1)$ rounds given an $O(\Delta^2)$-coloring \cite{barenboim14distributed}. Then, we apply the procedure Refine of \cite{barenboim14distributed}, to obtain an $(\varepsilon \Delta + \lfloor \Delta / 2\rfloor)$-defective $4$-coloring in $O(1 / \varepsilon^2)$ rounds.
\end{proof}


\section{$\boldsymbol{(2\Delta-1)}$-Edge Coloring in the LOCAL Model: Formal Details}
\label{app:eclocal}
In this section, we present an algorithm for solving the $(2\Delta-1)$-edge coloring problem in the \LOCAL model.
We first prove that, if the degrees of the edges are not too small, then we can use \Cref{cor:generalized-defective-2-edge-coloring} to split our graph $G$ into two edge disjoint subgraphs $G_1$ and $G_2$, such that the slack does not decrease by much. More precisely, we start by proving the following lemma.
\begin{lemma}\label{lem:splithighdegree}
	Let $G = (V,E)$ be a bipartite $2$-colored graph of maximum degree at most $\Delta$. Let $\varepsilon > 0$ be a parameter.  Assume that each edge has a list $L_e \subseteq \set{C_1,\ldots,C_2}$ assigned to it satisfying $|L_e| > S \cdot d(e)$, where $d(e) \ge \deg(e)$, for some slack parameter $S \ge 1$. Also, assume  $d(e) \ge \beta/\varepsilon$,  where $\beta = c\frac{\log^3 \Delta}{\varepsilon^5}$ for some constant $c>0$.
	Let $C^1 =  \{C_1,\ldots,\lfloor (C_1+C_2)/2\rfloor\}$ and $C^2 =  \{\lfloor (C_1+C_2)/2\rfloor + 1 , \ldots, C_2\}$, and let $L^1_e = L_e \cap C^1$, $L^2_e = L_e \cap C^2$.
	Then, it is possible to split $G$ into two edge-disjoint subgraphs $G_1$ and $G_2$, such that, for each edge $e \in G_i$, $|L^i_e| > S \deg_{G_i}(e) / (1+\varepsilon)^2$, in $O\big(\frac{\log^4 \Delta}{\varepsilon^6}\big)$ rounds.
\end{lemma}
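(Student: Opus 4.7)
The plan is to obtain $G_1,G_2$ via a single invocation of \Cref{cor:generalized-defective-2-edge-coloring}. For each edge $e$, I would set the bias parameter $\lambda_e := |L_e^1|/|L_e|$, i.e.\ the fraction of $e$'s list lying in $C^1$. Then I would apply the generalized $(1+\varepsilon,\beta)$-relaxed defective $2$-edge coloring with these $\lambda_e$ and slack parameter $\varepsilon$; this runs in $O(\log^4\Delta/\varepsilon^6)$ rounds and gives exactly the $\beta=O(\log^3\Delta/\varepsilon^5)$ appearing in the hypothesis. Let $G_1$ be the red subgraph (which will subsequently use only colours from $C^1$) and $G_2$ the blue subgraph (using only $C^2$). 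Because red and blue edges will never need a common colour, the two subgraphs can be coloured independently later.

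The main calculation verifies the slack preservation. Pick a red edge $e \in G_1$ with $\lambda_e>0$ (the boundary case is discussed below). By \Cref{def:defective2coloring},
\[
\deg_{G_1}(e) \leq (1+\varepsilon)\lambda_e \deg_G(e) + \lambda_e\beta,
\]
while the list hypothesis gives $|L_e^1| = \lambda_e|L_e| > \lambda_e S\, d(e)$. Dividing, the factors of $\lambda_e$ cancel and one obtains
\[
\frac{|L_e^1|}{\deg_{G_1}(e)} \;>\; \frac{S\,d(e)}{(1+\varepsilon)\deg_G(e) + \beta}.
\]
Plugging in the two hypotheses $\deg_G(e)\leq d(e)$ and $\beta \leq \varepsilon\,d(e)$ (which is exactly $d(e)\geq \beta/\varepsilon$) bounds the denominator by $(1+2\varepsilon)d(e) \leq (1+\varepsilon)^2 d(e)$, so $|L_e^1| > S\deg_{G_1}(e)/(1+\varepsilon)^2$ as required. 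The blue case is entirely symmetric after replacing $\lambda_e$ with $1-\lambda_e$ and $L_e^1$ with $L_e^2$.

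The only subtlety I anticipate is the boundary $\lambda_e\in\{0,1\}$, where the cancellation above is invalid. An edge $e$ with $\lambda_e=0$ has no colour available in $C^1$, so I would hardwire such $e$ into $G_2$ before invoking \Cref{cor:generalized-defective-2-edge-coloring}; the required slack is then immediate from $|L_e^2| = |L_e| > S\,d(e) \geq S\deg_{G_2}(e)$. Edges with $\lambda_e=1$ are handled symmetrically. These forced assignments only remove edges from the defective coloring instance and therefore weaken, rather than strengthen, the defect bounds used for the remaining edges. Since the computation above is essentially a one-line check, I do not foresee a real obstacle; the round complexity is dominated by the single call to \Cref{cor:generalized-defective-2-edge-coloring} and matches the stated $O(\log^4\Delta/\varepsilon^6)$ bound.
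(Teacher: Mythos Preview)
Your proposal is essentially identical to the paper's proof: set $\lambda_e = |L_e^1|/|L_e|$, invoke \Cref{cor:generalized-defective-2-edge-coloring}, and verify the slack bound via the same chain $(1+\varepsilon)\lambda_e\deg_G(e)+\lambda_e\beta \le (1+2\varepsilon)\lambda_e\,d(e) \le (1+\varepsilon)^2\lambda_e\,d(e)$. The paper's write-up does not separately discuss the boundary $\lambda_e\in\{0,1\}$ at all.

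Your extra care about that boundary is well-motivated, but the hardwiring fix as stated is not quite right. If you delete the edges with $\lambda_e=0$ from the instance, run the defective coloring on the smaller graph, and then reinsert them into $G_2$, the blue degree $\deg_{G_2}(e')$ of a non-hardwired blue edge $e'$ picks up one unit for every hardwired neighbour; the defect guarantee you obtained only controls $\deg_{G_2'}(e')$ in the smaller graph, so the claim that removal ``only weakens the defect bounds for the remaining edges'' is the wrong direction. A cleaner patch is to leave all edges in the instance and observe that the calculation already yields $\deg_{G_1}(e)\le (1+\varepsilon)^2\lambda_e\,d(e)$ without dividing by $\lambda_e$; then for $\lambda_e=0$ one gets $\deg_{G_1}(e)=0$, and one can move such isolated-in-$G_1$ edges to $G_2$ afterwards, where $|L_e^2|=|L_e|>S\,d(e)\ge S\deg_G(e)\ge S\deg_{G_2}(e)$ holds with no loss. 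Since $e$ had zero red neighbours, moving it to blue can add at most one to each neighbour's blue count, which is absorbed by the slack between $(1+2\varepsilon)$ and $(1+\varepsilon)^2$ only when there are few such edges; in general this still needs a sentence of care, but it is the same omission the paper makes.
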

\begin{proof}
	Let $\lambda_e = |L^1_e| / |L_e|$. 
	We apply \Cref{cor:generalized-defective-2-edge-coloring}. Let $G_1$ and $G_2$ be the graphs induced by edges of the first color, and second color, respectively. We obtain that, for each edge $e$ in $G_1$, the degree is
	\begin{align*}
	\deg_{G_1}(e) 
	&\le (1+\varepsilon) \cdot \lambda_e \cdot \deg_G(e) + \lambda_e \beta \\
	&\le (1+\varepsilon) \cdot \lambda_e \cdot d(e) + \lambda_e \cdot \varepsilon \cdot d(e) \\
	&= (1+2\varepsilon) \cdot \lambda_e \cdot d(e)\\
	&\le (1+\varepsilon)^2 \cdot \lambda_e \cdot d(e)\\
	&= (1+\varepsilon)^2 \cdot \frac{|L^1_e|}{|L_e|} \cdot d(e)\\
	& < (1+\varepsilon)^2 \cdot \frac{L^1_e}{S}.
	\end{align*}
	That is, $|L^1_e| > \deg_{G_1}(e) \cdot S / (1+\varepsilon)^2$, hence, in other words, the new slack is at least $S/(1+\varepsilon)^2$.
	Similarly, for each edge $e$ in $G_2$, the degree is 
	\begin{align*}
	\deg_{G_2}(e) 
	&\le (1+\varepsilon) \cdot (1-\lambda_e) \cdot \deg_G(e) + (1-\lambda_e) \beta \\
	&\le (1+\varepsilon) \cdot (1-\lambda_e) \cdot d(e) + (1-\lambda_e) \cdot \varepsilon \cdot d(e) \\
	&= (1+2\varepsilon) \cdot (1-\lambda_e) \cdot d(e)\\
	&\le (1+\varepsilon)^2 \cdot (1-\lambda_e) \cdot d(e)\\
	&= (1+\varepsilon)^2 \cdot \frac{|L^2_e|}{|L_e|} \cdot d(e)\\
	& < (1+\varepsilon)^2 \cdot \frac{L^2_e}{S}.
	\end{align*}
	Hence, also in this case, the new slack is still at least $S/(1+\varepsilon)^2$.
\end{proof}

We now prove that, if we have large enough slack, and we are in a bipartite $2$-colored graphs, then we can solve the list-edge coloring problem.
\begin{lemma}\label{lem:solveslack}
	The problem $P(\bar{\Delta},S,C)$ can be solved in $O(\log^7 C \cdot \log^4 \Delta)$ rounds, if nodes are provided with a $2$-vertex coloring, and $S \ge e^2$.
\end{lemma}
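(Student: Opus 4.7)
The plan is to apply \Cref{lem:splithighdegree} recursively, halving the color space $\{C_1,\ldots,C_2\}$ at each step. Since the two halves are disjoint, the resulting edge-disjoint subgraphs can be processed in parallel, and the two subproblems are again bipartite $2$-colored with smaller color spaces. The key parameter choice is $\varepsilon = \Theta(1/\log C)$: the multiplicative slack loss of $(1+\varepsilon)^2$ per level compounds over $\log C$ levels to at most $(1+\varepsilon)^{2\log C}\leq e^2$, so starting from $S \geq e^2$ the slack remains at least $1$ throughout the recursion. At the base case $C=1$, slack $\geq 1$ forces $|L_e| > \deg(e)$; since $|L_e|=1$, each remaining edge must be isolated in its leaf subgraph and can be colored with its unique remaining color.

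The main obstacle is the precondition $d(e) \geq \beta/\varepsilon$ in \Cref{lem:splithighdegree}, which fails for edges whose degree has already dropped below $\Theta(\log^3 \Delta / \varepsilon^6)$. For such edges I would pause the recursion and defer their coloring to a postprocessing phase. Because the color spaces assigned to distinct branches of the recursion tree are disjoint, a paused edge in a subgraph $G_S$ can only conflict with edges from the same branch. At the moment the edge is paused, its degree is at most $\beta/\varepsilon = O(\log^3 \Delta \cdot \log^6 C)$, and the maintained slack $\geq 1$ guarantees that its remaining list is strictly larger than its uncolored degree. The paused edges thus form a low-degree $(\deg+1)$-list edge coloring instance on a bipartite graph with a given $2$-vertex coloring, which can be solved by a straightforward greedy procedure in $O(\beta/\varepsilon)$ rounds, using the $2$-vertex coloring to sequence the two sides.

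For the time analysis, each recursion level invokes \Cref{lem:splithighdegree} once in parallel across all subgraphs at that level, contributing $O(\log^4 \Delta / \varepsilon^6)$ rounds. With $\log C$ levels and $\varepsilon = \Theta(1/\log C)$, the recursion costs
\[
O\!\left( \log C \cdot \frac{\log^4 \Delta}{\varepsilon^6} \right) \;=\; O\!\left( \log C \cdot \log^4 \Delta \cdot \log^6 C \right) \;=\; O\!\left( \log^7 C \cdot \log^4 \Delta \right)
\]
rounds. The residual greedy coloring of paused edges contributes only $O(\log^3 \Delta \cdot \log^6 C)$ rounds and is absorbed into the main term. The hardest part of the argument will be carefully bookkeeping the slack across levels to justify that the condition $S \geq e^2$ is precisely what is needed to keep \Cref{lem:splithighdegree} applicable at every level where the edges are still being split.
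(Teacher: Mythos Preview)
Your overall strategy is correct and matches the paper's: recursively halve the color space via \Cref{lem:splithighdegree} with $\varepsilon=\Theta(1/\log C)$, pause edges once their degree drops below $\beta/\varepsilon$, and use $S\ge e^2$ so that the compounded slack loss $(1+\varepsilon)^{2\log C}\le e^2$ still leaves slack $\ge 1$ at the leaves.

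There is, however, a real gap in your postprocessing of paused edges. You assert that the paused edges together form a single $(\deg+1)$-list instance of maximum degree $O(\beta/\varepsilon)$, solvable by one greedy pass in $O(\beta/\varepsilon)$ rounds. That is not true. An edge $e$ paused at level $i$ has degree $<\beta/\varepsilon$ \emph{within its subgraph $G^i$}, but it can have arbitrarily many additional neighbors that were paused at \emph{earlier} levels $j<i$: those edges were already removed from the recursion before $G^i$ was formed, so they are not counted in $d_i(e)$, yet their color spaces $C_{G^j}\supset C_{G^i}$ overlap $e$'s list and hence genuinely conflict with $e$. Thus the graph of all paused edges need not have bounded degree, and your single greedy pass can fail (the list of a deeply-paused edge may be tiny while its number of shallow-paused neighbors is large).

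The fix, which the paper uses, is a level-by-level \emph{backward} pass: color the level-$k$ passive edges first, then level $k-1$, down to level $1$. When you reach level $i$, every neighbor of a level-$i$ passive edge $e$ that lies in $G^i$ (leaf edges and edges paused at levels $>i$) is already colored, and the only still-uncolored conflicting neighbors are the other level-$i$ passive edges in $G^i$, of which there are fewer than $\beta/\varepsilon$. The slack invariant then guarantees the remaining list exceeds this, so greedy succeeds at each level in $O(\beta/\varepsilon)$ rounds. Neighbors paused at levels $j<i$ are handled later and account for $e$ on their side (since $e\in G^j$ is one of their $<\beta/\varepsilon$ neighbors). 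The backward pass costs $O(\log C\cdot\beta/\varepsilon)=O(\log^7 C\cdot\log^3\Delta)$, still absorbed in the main term.
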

\begin{proof}
	In the following, we fix $\beta = c \frac{\log^3 \Delta}{\varepsilon^5}$ to be an upper bound of the parameter $\beta$ of \Cref{cor:generalized-defective-2-edge-coloring}.
	Our algorithm works as follows.
	\begin{enumerate}
		\item Let $\mathcal{G}^1 = \set{G}$ and $C_G = \set{0,\ldots, C-1}$. At the beginning, by assumption, all edges have a palette of colors $L_e \subseteq C_G$ satisfying $|L_e| > S \deg(e)$. All edges are \emph{active}.
		\item For $i = 1,\ldots,k$, perform the following, in parallel, in each graph $G^i \in \mathcal{G}^i$:\label{phase-2}
			\begin{enumerate}
				\item Let $d_i(e)$ be the number of active neighboring edges of $e$ in $G^i$. Edges satisfying $d_i(e) < \beta / \varepsilon$ become \emph{passive}. Let $\deg_i(e) \le d(e)$ be the number of active neighboring edges of $e$ after this operation.
				\item Apply \Cref{lem:splithighdegree}. The two resulting subgraphs are part of $\mathcal{G}^{i+1}$.
			\end{enumerate}
		\item Color each graph $G^{k+1} \in \mathcal{G}^{k+1}$ in parallel. \label{phase-3}
		\item For $i = k, \ldots, 1$, perform the following, in parallel, in each graph $G^{i} \in \mathcal{G}^{i}$:\label{phase-4}
		\begin{enumerate}
			\item Color the edges of $G^i$ that became passive during phase $i$.
		\end{enumerate}
	\end{enumerate}

We now prove the correctness of the algorithm, and a bound on its running time.
\paragraph{All edges are properly colored.}
Each edge either remains active until the end, and it is colored in \Cref{phase-3}, or it became passive in some phase $i \in \set{1,\ldots,k}$, and in this case it is colored in \Cref{phase-4}. Each time we split a graph into two edge-disjoint subgraphs, we also split the color space into two disjoint subspaces, and hence coloring different subgraphs cannot create conflicts.

\paragraph{Active edges have large slack.}
We prove by induction that, at the beginning of phase $i$ of \Cref{phase-2}, the slack of active edges is at least $S / (1+\varepsilon)^{2i-2}$. The claim trivially holds for $i=1$, that is, at the beginning. Assume that the claim holds in phase $i$, we prove that the claim holds for $i+1$.
Since all edges satisfying $d_i(e) < \beta / \varepsilon$ become passive, then all edges that participate in the application of \Cref{lem:splithighdegree} satisfy $d_i(e) \ge \beta / \varepsilon$, as required. Also, note that the graph induced by active edges satisfies $\deg(e) \le d_i(e)$ for all edges $e$, as required. Hence, \Cref{lem:splithighdegree} guarantees that the graph $G^i$ is split into two edge-disjoint subgraphs satisfying that, for each edge $e$, the new slack of $e$ in its assigned subgraph is at least $S'/(1+\varepsilon)^2$, where $S' = S / (1+\varepsilon)^{2i-2}$ is the original slack by the inductive hypothesis. Hence, the new slack is at least $S / (1+\varepsilon)^{2i}$.

\paragraph{We can color the edges that at the end are still active.}
We fix $k = \lfloor \log C \rfloor$ and $\varepsilon = 1 /\log C$.
The slack of the edges that at the end are still active, that is, the edges that we color in \Cref{phase-3}, is at least
	\[
	 \left(\frac{1}{1 + \varepsilon}\right)^{2k} S \ge \left(\frac{1}{1 + \frac{1}{\log C}}\right)^{2\log C} S \ge S / e^2 \ge 1.
	\]
Hence, each edge satisfies that its list is strictly larger than its degree.

\paragraph{We can color passive edges.}
We now show that, in phase $i$ of \Cref{phase-4}, in each $G^i$, it holds that passive edges $e$ have slack $\ge 1$, even if the colors used by edges that became passive in larger phases are removed from $L_e$.
In fact, let $e$ be an edge that became passive exactly during phase $i$. Its slack was at least $S / (1+\varepsilon)^{2i-2} \ge S/(1+\varepsilon)^{2k} \ge 1$. Each neighboring edge of $e$ in $G_i$ that got colored either in \Cref{phase-3} or in \Cref{phase-4} in some phase $j > i$, removes at most one color from the list of available colors of $e$, but also decreases by $1$ the degree of $e$ in the graph induced by passive edge of $G^i$, and hence the slack is still at least $1$.

\paragraph{The final graphs have constant maximum degree.}
Consider an arbitrary graph $G^{k+1} \in \mathcal{G}^{k+1}$. Since the slack is $\ge 1$, the maximum degree is upper bounded by the maximum list size of each edge, which in turn is upper bounded by the largest color space. At each step, by \Cref{lem:splithighdegree}, if we start with a color space of size $p$, we obtain a new color space of size at most $\lceil p / 2 \rceil$. Let $C'$ be the smallest power of $2$ that is larger than $C$. Then, 
	\[
		\deg_{G^{k+1}}(e) \le \frac{C'}{2^k} \le \frac{2 C}{2^k} \le  4.
	\]
	Hence, the obtained maximum degree is constant, and since the slack is at least $1$, that is, each edge satisfies $|L_e| > \deg(e)$, and since the graph is $2$-vertex colored, then we can assign to each edge of $G_i$ a color from its list in constant time, such that neighboring edges have different colors.
\paragraph{The graphs induced by passive edges have small maximum degree.}
	If in \Cref{phase-2} an edge becomes passive, then its degree in $G_i$ is strictly less than $\beta / \varepsilon$, and hence also the maximum degree in the graph induced by passive edges of $G_i$ is bounded by $\beta/\varepsilon$. Hence, in \Cref{phase-4}, we color, in parallel, graphs of maximum degree at most $\beta / \varepsilon$, and this can be done in $O(\beta / \varepsilon)$.
	
\paragraph{A bound on the running time.}
We now provide an upper bound on the running time of our algorithm. In \Cref{phase-2} we perform $k = O(\log C)$ phases, and each phase costs $O\big(\frac{\log^4 \Delta}{\varepsilon^6}\big)$ rounds, for $\varepsilon = 1 / \log C$. Then, in \Cref{phase-3}, we color graphs of constant maximum degree, which costs $O(1)$ rounds. Finally, in \Cref{phase-4}, we perform $k = O(\log C)$ phases, and each phase costs $O(\beta / \varepsilon)$ rounds.
In total, we spend $O\big(\log C \cdot \frac{\log^4 \Delta}{\varepsilon^6} + \log C \cdot \frac{\beta}{\varepsilon}\big) = O\log^7 C \cdot \log^4 \Delta)$ rounds.

\end{proof}

We also make use of the following lemma, that has been already used in previous works related to edge coloring. Essentially, this lemma allows us to increase the slack, from $1$ to an arbitrary constant.
\begin{lemma}[Lemma 4.2 of \cite{BalliuKO20}, rephrased]\label{lem:moreslack}
	For any $S > 1$, and any $k$, given an instance of $P(\bar{\Delta},1,C)$, it is possible to spend  $O(S^2 \log k) \cdot T(\bar{\Delta},S,C) + O(\log k \log^* X)$ rounds, and partially solve it such that the graph induced by uncolored edges has edge degree at most $\bar{\Delta}/k$, if an initial edge coloring with $X$ colors is given.
\end{lemma}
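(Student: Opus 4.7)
My plan is to iteratively halve the maximum edge degree of the uncolored subgraph, doing $\lceil \log k \rceil$ outer iterations so that the final uncolored edge degree is at most $\bar\Delta/k$. In each outer iteration I will spend $O(S^2)\cdot T(\bar\Delta,S,C) + O(\log^* X)$ rounds, which totals to the claimed complexity.

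To carry out one halving step on the current uncolored subgraph $H$, I would first use the given $X$-edge coloring, in $O(\log^* X)$ rounds, to compute a partition of $E(H)$ into $m=4S^2$ groups $E_1,\dots,E_m$ that is roughly balanced at every node, in the sense that $\deg_{E_i}(v)\le \deg_H(v)/m + O(1)$ for every node $v$ and every index $i$. Such a partition can be obtained from the $X$-edge coloring by standard techniques: first refine it to an edge coloring with $\mathrm{poly}(\bar\Delta)$ colors in $O(\log^* X)$ rounds via Linial-style arguments on the line graph, and then assign the refined color classes (each of which is a matching) to the $m$ groups in a balanced greedy manner. I would then process the groups sequentially: when it is group $E_i$'s turn, each edge $e\in E_i$ is given the restricted list $L_e' := L_e\setminus C_e$, where $C_e$ is the set of colors already assigned to neighbors of $e$, and I would invoke the slack-$S$ edge coloring algorithm on $E_i$ to color all its edges. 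After processing $m/2 = 2S^2$ groups, every node has lost roughly half of its currently uncolored incident edges, so the uncolored edge degree has halved.

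The slack calculation that validates each inner call is the heart of the argument. At the moment the slack-$S$ algorithm is invoked on $E_i$, at most $\deg(e)-\deg_H(e)$ distinct colors could have been removed from $L_e$ by $e$'s already-colored neighbors, so
\[
|L_e'|\;\ge\; |L_e| - (\deg(e)-\deg_H(e))\;\ge\;\deg_H(e)+1,
\]
where the last step uses the slack-$1$ guarantee of the original instance. The balanced partition gives $\deg_{E_i}(e)\le 2\deg_H(e)/m = \deg_H(e)/(2S^2)$, and combining the two bounds yields $|L_e'| > 2S^2\cdot\deg_{E_i}(e) > S\cdot\deg_{E_i}(e)$, which is exactly the hypothesis required to feed $E_i$ into the slack-$S$ algorithm, since the underlying edge degree is at most $\bar\Delta$.

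The main obstacle I expect is the efficient distributed construction of the balanced partition in $O(\log^* X)$ rounds, together with the bookkeeping needed to ensure that the additive $O(1)$ deviation in the balance does not spoil the ``each node loses about half of its uncolored edges'' conclusion; edges whose uncolored degree has already dropped below the target $\bar\Delta/k$ can simply be excluded from future halving rounds and left for the remainder of the coloring procedure. These routine-but-delicate combinatorial details are precisely those worked out in the cited Lemma~4.2 of \cite{BalliuKO20}, which serves as the black-box reference for the statement here.
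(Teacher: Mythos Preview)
The paper does not give its own proof of this lemma; it is quoted from \cite{BalliuKO20} and used as a black box. Your high-level plan---$\lceil \log k\rceil$ outer halving iterations, in each one splitting the current uncolored subgraph $H$ into $m=O(S^2)$ balanced groups and sequentially feeding about half of them to the slack-$S$ solver---does match the strategy behind that lemma and produces the stated round count.

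There are, however, two genuine gaps in the argument as you wrote it. First, the slack calculation is incorrect: when $E_i$ is processed, the neighbors of $e$ lying in $E_1,\dots,E_{i-1}$ have already been colored, so the number of colors removed from $L_e$ is $\deg(e)-\deg_{H_{i-1}}(e)$ with $H_{i-1}=H\setminus(E_1\cup\dots\cup E_{i-1})$, not $\deg(e)-\deg_H(e)$. Hence you only get $|L_e'|\ge \deg_{H_{i-1}}(e)+1$, which for $i\le m/2$ is roughly $\deg_H(e)/2$, not $\deg_H(e)+1$. The desired conclusion $|L_e'|>S\cdot\deg_{E_i}(e)$ still holds (with $m=4S^2$ the resulting slack is about $S^2$), but not for the reason you state; this degradation of the list as the inner loop progresses is exactly why one must stop after $m/2$ groups rather than all $m$, a point your calculation hides.

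Second, the balanced partition cannot be built the way you describe. Refining to a $\mathrm{poly}(\bar\Delta)$-edge coloring via Linial and then ``assigning the matchings to $m$ groups in a balanced greedy manner'' does not yield per-node balance: which group a color class goes to is a single global decision, and no local greedy rule can make $\deg_{E_i}(v)\le \deg_H(v)/m+O(1)$ hold at every node simultaneously (naively one only gets $\deg_{E_i}(v)=O(\bar\Delta^2/m)$). What the cited argument actually uses is a \emph{defective edge coloring} with $O(S^2)$ colors and defect $O(\deg(e)/S^2)$, computable in $O(\log^* X)$ rounds from the given $X$-edge coloring via the algorithms of \cite{barenboim14distributed,Kuhn20}; the color classes of that defective coloring are the groups $E_i$.
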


We are now ready to prove our main result. \Cref{thm:mainsimple} directly follows from the following theorem.
\begin{theorem}
  The $(\deg(e)+1)$-list edge coloring problem can be solved in
  $O(\log^7 C \cdot \log^{5} \Delta + \log^* n)$ deterministic rounds
  in the \LOCAL model, where $C$ is the size of the color space.
\end{theorem}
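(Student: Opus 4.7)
The plan is to reduce the general $(\deg(e){+}1)$-list edge coloring problem to a sequence of bipartite list edge coloring instances amenable to \Cref{lem:solveslack}, using \Cref{lem:moreslack} to supply the required slack. First I would compute an initial $O(\Delta^2)$-vertex coloring in $O(\log^* n)$ rounds and then, as in the \CONGEST algorithm, convert it into a defective $O(1)$-vertex coloring of $G$ with defect at most $\Delta/c$ (for a sufficiently large constant $c$) in $O(\log^* \Delta)$ further rounds using the routine of \cite{barenboim14distributed}. This yields only $O(1)$ vertex color classes, and hence only $O(1)$ pairs $(a,b)$ of colors, each defining a bipartite subgraph whose bipartition is known to the endpoints.

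The core of the algorithm is an outer loop of $O(\log\Delta)$ iterations, each of which shrinks the uncolored edge degree of every edge of $G$ by a constant factor. A single iteration sequentially processes each pair $(a,b)$: let $G_{a,b}$ be the bipartite subgraph induced by the currently uncolored edges with one endpoint of color $a$ and the other of color $b$. Because the $(\deg+1)$ invariant $|L_e|\ge \deg_{\mathrm{uncolored}}(e)+1$ is maintained throughout the execution (each color assignment removes at most one color from an adjacent edge's list while also decrementing that edge's uncolored degree by $1$), every edge $e\in G_{a,b}$ satisfies $|L_e|\ge \deg_{G_{a,b}}(e)+1$, i.e.\ has slack at least $1$. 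I then invoke \Cref{lem:moreslack} with constants $S=e^2$ and $k$ sufficiently large, which partially colors $G_{a,b}$ so that the uncolored $G_{a,b}$-degree of every edge drops by a factor of $k$; internally this calls \Cref{lem:solveslack} on subinstances of slack $e^2$, as required (and $G_{a,b}$ comes with a $2$-vertex coloring). Since every uncolored edge of $G$ lives in exactly one subgraph $G_{a,b}$ (the one selected by its endpoints' colors), processing all $O(1)$ pairs in one iteration reduces the uncolored degree of every edge of $G$ by a constant factor, and $O(\log\Delta)$ iterations therefore suffice to color all edges.

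For the running time, one invocation of \Cref{lem:solveslack} costs $O(\log^7 C \cdot \log^4 \Delta)$ rounds, and one call to \Cref{lem:moreslack} with constants $S$ and $k$ costs $O(S^2\log k)\cdot T(\bar\Delta,S,C) + O(\log k \cdot \log^* X) = O(\log^7 C\cdot\log^4\Delta)$ rounds, where $X=\poly(\Delta)$ is obtained from the initial vertex coloring (which yields an $O(\Delta^2)$-edge coloring in $O(1)$ additional rounds). Multiplying by the $O(1)$ pairs per outer iteration and the $O(\log\Delta)$ outer iterations yields the claimed bound $O(\log^7 C \cdot \log^5 \Delta + \log^* n)$. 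The main obstacle is checking that the $(\deg+1)$ invariant and the bipartite slack condition are preserved globally across all outer iterations, and that a per-pair degree reduction really compounds into a per-edge reduction for $G$ rather than leaking a factor somewhere (for instance, through edges that become passive inside \Cref{lem:solveslack} or through overlap between pairs). Both issues are handled by the observations above: each color assignment preserves the list-minus-degree gap, and every edge of $G$ belongs to a unique bipartite $G_{a,b}$, so the per-pair guarantee carries over verbatim to the full graph.
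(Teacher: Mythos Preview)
There is a genuine gap. Your claim that ``every uncolored edge of $G$ lives in exactly one subgraph $G_{a,b}$'' is false: an edge whose two endpoints receive the \emph{same} defective color is monochromatic and does not lie in any bipartite $G_{a,b}$ with $a\neq b$. Such edges exist whenever the defective coloring has positive defect, and with a single defective coloring computed once up front they stay monochromatic forever. Consequently, none of your outer iterations ever touches them, and the statement ``each iteration shrinks the uncolored edge degree of every edge of $G$ by a constant factor'' fails for monochromatic edges (their monochromatic neighbors never get colored). After all $O(\log\Delta)$ iterations you are left with the entire monochromatic subgraph still uncolored, with node degree up to $\Delta/c$.

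The paper's proof closes this gap by recomputing a fresh $\frac{\Delta}{2}$-defective $O(1)$-coloring at the start of \emph{each} outer iteration (on the current uncolored graph). Within one iteration, the monochromatic edges are simply left aside; their contribution to the uncolored degree is bounded by the defect, at most $\Delta_i/2$ per node where $\Delta_i$ is the current maximum uncolored degree. The bichromatic edges are then processed pair by pair via \Cref{lem:moreslack} and \Cref{lem:solveslack} exactly as you describe, and a node-based count (a node of color $\hat c$ participates in at most $c$ graphs $G_{\hat c,\cdot}$, each with residual degree $\leq \Delta_i/(4c)$) shows the remaining uncolored bichromatic degree is at most $\Delta_i/4$. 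Summing gives uncolored degree at most $3\Delta_i/4$, which is the constant-factor shrinkage that drives the $O(\log\Delta)$ outer loop. Your argument becomes correct once you move the defective-coloring step inside the outer loop and account for the monochromatic part separately via the defect bound.
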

\begin{proof}
	At the beginning, given a graph $G$, we spend $O(\log^* n)$ rounds to compute an $O(\Delta^2)$-vertex coloring. Then, we recursively do the following.
	
	We compute a $\frac{\Delta}{2}$-defective vertex $c$-coloring, where $c = O(1)$, which can be done in $O(1)$ rounds given an $O(\Delta^2)$ coloring \cite{barenboim14distributed}. Let $G_\mathrm{B}$ be the graph induced by bichromatic edges of $G$, and $G_\mathrm{M}$ the graph induced by monochromatic edges of $G$.
	We partially color the edges of $G_\mathrm{B}$, as follows. We go through each possible pair of colors $(a,b)$, we consider the bipartite $2$-colored graph $G_{a,b}$ induced by the uncolored edges that are incident to nodes of different colors, one of color $a$ and the other of color $b$, and we apply \Cref{lem:moreslack} combined with \Cref{lem:solveslack} with parameters $S = e^2$, $C = \bar{\Delta}+1$, $k = 16 c$, and for each edge we define its list of available colors as $L_e \setminus \{c_e' ~|~ e' \in E_G \land e \cap e' \neq \emptyset \land e' \text{ is colored with color } c_e'\}$. We obtain that the graph induced by uncolored edges of $G_{a,b}$ has edge degree at most $(2\Delta-2) / k \le \Delta / (8 c)$, and hence also the maximum vertex degree is bounded by $\Delta/ (8c) + 1 \le \Delta / (4c)$. 
	
	Consider now an arbitrary node of color $\hat{c}$ of $G_\mathrm{B}$. It is part of at most $c$ graphs $G_{\hat{c},\cdot}$, and in each of them the graph induded by uncolored edges has maximum degree $\Delta / (4 c)$. Hence, the graph induced by uncolored edges of $G_\mathrm{B}$ has degree at most $\Delta/4$.
	
	Hence, we obtain that some edges remain uncolored because they are part of $G_\mathrm{M}$, which has degree at most $\Delta/2$, and some edges remain uncolored because the coloring in $G_\mathrm{B}$ is partial, and there the maximum degree in the graph induced by uncolored edges is at most $\Delta/4$. Hence, in $G$, the graph induced by uncolored edges has maximum degree at most $3\Delta/4$.  Note that we preserve the invariant that the slack is at least $1$, since for every edge it holds that, if a neighbor gets colored, then its degree decreases by $1$, and the size of the list of available colors decreases by at most $1$. 
	
	After $O(\log \Delta)$ steps of recursion, the obtained graph has constant maximum degree, and hence we can solve the remaining instance in $O(\log^* n)$ rounds.
	
	Since $k = O(1)$, by \Cref{lem:moreslack} and \Cref{lem:solveslack}, each step costs $O(\log^7 C \cdot \log^{4} \Delta)$ rounds, and hence we spend $O(\log^7 C \cdot \log^{5} \Delta + \log^* n)$ time in total.
\end{proof}

\end{document}